\newtheorem{theorem}{Theorem}
\newtheorem{lemma}[theorem]{Lemma}
\newtheorem{corollary}[theorem]{Corollary}
\newtheorem*{remark}{Remark}
\newtheorem{definition}{Definition}
\newtheorem{prop}[theorem]{Proposition}
\newcommand{\A}{A}
\newcommand{\B}{B}
\newcommand{\biat}{{\sf E}}
\renewcommand{\phi}{\varphi}
\newcommand{\tensor}{\otimes}
\renewcommand{\implies}{\multimap}
\renewcommand\t{\vdash}
\renewcommand{\H}{\ensuremath{\mathsf{H}}\xspace}
\newcommand{\HMILL}{\ensuremath{\mathsf{H\mbox{-}MILL}}\xspace}
\newcommand{\ILL}{\ensuremath{\mathsf{ILL}}\xspace}
\newcommand{\MILL}{\ensuremath{\mathsf{MILL}}\xspace}
\newcommand{\PCMILL}{\ensuremath{\mathsf{PCMILL}}\xspace}
\newcommand{\RSBIAT}{\ensuremath{\mathsf{RSBIAT}}\xspace}
\newcommand{\SRSBIAT}{\ensuremath{\mathsf{SRSBIAT}}\xspace}
\newcommand{\HRSBIAT}{\ensuremath{\mathsf{H\mbox{-}RSBIAT}}\xspace}
\newcommand\AC[1]{\AxiomC{$#1$}}
\newcommand\rl[1]{\RightLabel{#1}}
\newcommand\BC[1]{\BinaryInfC{$#1$}}
\newcommand\UC[1]{\UnaryInfC{$#1$}}
\newcommand\dip{\DisplayProof}
\newcommand\dnote[1]{#1}
\begin{document}


\title{Non-normal modalities in variants of Linear Logic\thanks{This article greatly extends an earlier paper which appeared in the proceedings of the 21st European Conference on Artificial Intelligence, ECAI 2014, \cite{PorelloTroquardECAI2014}.}}

\author{Daniele Porello\footnote{Laboratory for Applied Ontology, ISTC-CNR, Trento, Italy. \texttt{daniele.porello@loa.istc.cnr.it}} \and Nicolas Troquard\footnote{Algorithmic, Complexity and Logic Laboratory, Universit\'e Paris-Est, Cr\'eteil, France. \texttt{troquard@loa.istc.cnr.it}}
}

\date{}
\maketitle

\begin{abstract}
This article presents modal versions of resource-conscious logics. We
concentrate on extensions of variants of Linear Logic with one minimal
non-normal modality. In earlier work, where we investigated agency in
multi-agent systems, we have shown that the results scale up to logics
with multiple non-minimal modalities. Here, we start with the language
of propositional intuitionistic Linear Logic without the additive
disjunction, to which we add a modality. We provide an interpretation
of this language on a class of Kripke resource models extended with a
neighbourhood function: modal Kripke resource models. We propose a
Hilbert-style axiomatization and a Gentzen-style sequent calculus. We
show that the proof theories are sound and complete with respect to
the class of modal Kripke resource models. We show that the sequent
calculus admits cut elimination and that proof-search is in PSPACE. We
then show how to extend the results when non-commutative connectives
are added to the language. Finally, we put the logical framework to
use by instantiating it as logics of agency. In particular, we propose
a logic to reason about the resource-sensitive use of artefacts and
illustrate it with a variety of examples.
\medskip

\noindent  \textbf{Keywords: } modal logics; resource-conscious logics; agency; artefacts
\end{abstract}

\section{Introduction}

Logics for resources and modalities each got their share of the
attention and have also already been studied together. Besides the seminal work on intuitionistic modal logic in~\cite{PlotkinStirlingTARK1986,SimpsonThesis1994}, modalities for substructural implications have been studied in \cite{DagostinoEtAlSL1997} and extensions of intuitionistic Linear Logic with modalities have been investigated for
example in \cite{marion04,KamideTCS2006}. Moreover, modal
versions of logics for resources that are related to Linear Logic have
been provided in~\cite{PymTCS2004,Pym2006,courtault}.

Modalities in substructural logics are thus not new, although with
one important detail: to our knowledge, modalities in sub-structural
logics have always been restricted to \emph{normal}
modalities. Modalities that are not normal have been confined to the
realm of classical logic.

\emph{Normal modalities} are the modalities within a logic that is at
least as strong as the standard modal logic K. \emph{Non-normal
  modalities} on the other hand are the modalities that fail to satisfy some of the principles of the standard modal logic K. They cannot be
evaluated over a Kripke semantics but one typical semantics,
neighbourhood models, rely on possible worlds. They were introduced
independently by Scott and Montague. Early results were offered by
Segerberg. Chellas built upon and gave a textbook presentation
in~\cite{chellas80}.

The significance of non-normal modal logics and their semantics in
modern developments in logics of agents has been emphasised
before~\cite{DBLP:journals/sLogica/Arlo-CostaP06}.  Indeed many logics
of agents are non-normal and neighbourhood semantics allows for
defining the modalities that are required to model a number of
application domains: logics of coalitional
power~\cite{pauly02coalitional}, epistemic logics without
omniscience~\cite{Vardi86,DBLP:journals/ai/LismontM94}, logics of
agency~\cite{Governatori05Elgesem}, etc.

\medskip
Let us briefly present some features of reasoning about agency
specifically. Modalities of agency aimed at modelling the result of an
action have been largely studied in the literature in practical
philosophy and in multi-agent systems \cite{kangerkanger66,
  Prn77actionsocial, Governatori05Elgesem, belnap01facing,Tr14jaamas}. Logics of agency assume a set of
agents, and to each agent $i$ associate a modality 
$\biat_i$. We read $\biat_i \phi$ as ``agent $i$ brings about
$\phi$''. With the notable exception of Chellas'
logic~\cite{chellas69imperatives}, it is generally accepted in logics
of agency that no agent ever brings about a tautology. This is because
when $i$ brings about $\phi$, it is intended that $\phi$ might have
not be the case if it were not for $i$'s very agency. Classically, it
corresponds to the axiom $\lnot \biat_i \top$.  This principle is
enough for the logic of the modality $\biat_i$ to be non-normal as it is inconsistent with the necessitation rule. A
normal modal logic for agency would also dictate that if $i$ does
$\phi$ then she also does $\phi \lor \psi$. This is refuted in general
on the same grounds as Ross' paradox.
For instance, purposefully doing that I am rich
should not imply purposefully doing that I am rich or unhealthy.

In classical logic, $\biat_i \phi$ allows one to capture that agent
$i$ brings about \emph{the state of affairs} $\phi$. Moving from
states of affairs to resources, it is then interesting to lift these
modalities from classical logic to resource-conscious logic. In doing
so, one can capture with $\biat_i A$ that agent $i$ brings about
\emph{the resource} $A$.

\medskip
To make a start with this research program, we will combine
intuitionistic fragments of Linear Logic with non-normal
modalities. Linear Logic~\cite{Girard1987} is a resource-conscious
logic that allows for modelling the constructive content of deductions
in logic.
An \emph{intuitionistic} version of Linear Logic, as the one we will
work with, has a number of desirable features. One that is simple and
yet greatly appreciated is that in intuitionistic
sequent calculus every sequent has a single ``output'' formula. This
feature favours the modelling of input-output processes. It will prove
particularly adequate for our application to agency and artefacts as
it provides a simple mechanism for the compositionality of individual
artefacts' functions into complex input-output processes.

\medskip

The resource-sensitive nature of Linear Logic can be viewed as the absence of structural rules in the sequent calculus. Linear Logic rejects the global validity of weakening (W), that amounts to a monotonicity of the entailment, and contraction (C), that is responsible for arbitrary duplications of formulas, e.g., $A \rightarrow A \wedge A$ is a tautology in classical logic, and so is $A \land A \rightarrow A$.

\begin{center}
\begin{tabular}{ccc}
\AC{\Gamma \t A}
\rl{(W)}
\UC{\Gamma, B \t A}
\dip

&

\AC{\Gamma, B, B \t A}
\rl{(C)}
\UC{\Gamma, B \t A}
\dip

&
\AC{\Gamma, A, B \t C}
\rl{(E)}
\UC{\Gamma, B, A \t C}
\dip

\end{tabular}
\end{center}

 Accordingly, the linear implication $\implies$ encodes
 resource-sensitive deductions, for example from $A$ and $A \implies
 B$ we can infer $B$, by \emph{modus ponens}, but we are not allowed
 to conclude $B$ from $A$, $A$, and $A\implies B$.  Exchange (E) still
 holds. (Although we will later restrict it.) Hence, contexts of
 formulas $\Gamma$ in sequent calculus are considered multisets. By
 dropping weakening and contraction, we are led to define two
 non-equivalent conjunctions with different behaviours: the
 multiplicative conjunction $\otimes$ (tensor) and the additive
 conjunction $\with$ (with). The intuitive meaning of $\otimes$ is
 that an action of type $A \otimes B$ can be performed by summing the
 resources that are relevant to perform $A$ and to perform $B$.  The
 unit $\textbf{1}$ is the neutral element for $\otimes$ and can
 represent a null action. A consequence of the lack of weakening
 is that $A \otimes B$ no longer implies $A$, namely the resources
 that are relevant to perform $A \otimes B$ may not be relevant to
 perform just $A$. The absence of contraction means that $A \implies A
 \otimes A$ is no longer valid. The additive conjunction $A\with B$
 expresses an option, the choice to perform $A$ or $B$. Accordingly $A
 \with B \implies A$ and  $A \with B \implies B$ hold in Linear Logic, the resources that enable
 the choice between $A$ and $B$ are relevant also to make $A$ or to
 make $B$. The linear implication $A \implies B$ expresses a form of
 causality, for example ``If I strike a match, I light the room''
 the action of striking a match is consumed, in the sense that it is
 no longer available after the room is lighted. Linear implication and
 multiplicative conjunction interact naturally so that $(A \otimes (A
 \implies B)) \implies B$ is valid.

Linear Logic operators have been applied to a number of topics in knowledge representation and multiagent systems such as planning \cite{KanovichVauzeilles2001}, preference representation and resource allocation \cite{HarlandWinikoffAAMAS2002,PorelloEndrissKR2010,PorelloEndrissECAI2010}, social choice \cite{PorelloIJCAI2013}, actions modelling \cite{BorgoEtAlFOIS2014}, and narrative generation \cite{MartensEtAL2013}.

These propositional operators are very useful when talking about
resources and agency. They allow one to capture the following notions:
\begin{itemize}
\item Bringing about both $A$ and $B$ together: $\biat_i (A \otimes B)$. In such a way that $\biat_i (A \otimes B)$ implies $A\otimes B$, but does not imply $A$ alone.
\item Bringing about an option between $A$ and $B$: $\biat_i
  (A \with B)$. In such a way that $\biat_i
  (A \with B)$ implies $A$ and implies $B$, but does not imply $A\otimes B$.
\item Bringing about the transformation of the resource $A$ into the
  resource $B$:\linebreak $\biat_i (A \multimap B)$. In such as way
  that $A \otimes \biat_i (A \multimap B)$ implies $B$.
\end{itemize}

\medskip
The only structural rule that holds in Linear Logic is exchange (E),
that is responsible for the commutativity of the multiplicative operators and amounts
to forgetting sequentiality of actions, e.g., $A \otimes B
\rightarrow B \otimes A$. We will see in this paper how
to deal with ordered information. We will still admit exchange for the
two conjunctions $\with$ and $\otimes$ (commutative conjunctions), but
we will introduce a non-commutative counterpart $\odot$ to the
multiplicative $\otimes$. The formula $A \odot B$ is not equivalent
to $B \odot A$. Since $\odot$ is non-commutative, we can have two
\emph{order-sensitive} linear implications (noted in \cite{lambek1958}
$\setminus$ and $/$).

For agency in multi-agent systems, resources must often become
available at key points in a series of transformations. The order of
resource production and transformation becomes quickly relevant. It is
then interesting to talk about:
\begin{itemize}
\item Bringing about the resource $A$ first, then $B$: $\biat_i (A \odot B)$. In such a way that $\biat_i (A \odot B)$ is not equivalent to $\biat_i (B \odot A)$.
\item Bringing about the order-sensitive transformation of the resource from $A$ into $B$: $\biat_i(A \setminus B)$. In such a way that $A \odot \biat_i(A \setminus B)$ implies $B$, but $\biat_i(A \setminus B) \odot A$ does not.
\end{itemize}

These considerations motivate us to investigate the theoretical underpinnings
of modal versions of resource-conscious logics with the listed
propositional operators.

We will first concentrate on extensions with \emph{one}
\emph{minimal} \emph{non-normal} modality. In a second part,
where we address modalities of agency, we will exploit our results
that will naturally scale up to logics with
\emph{\smash{multiple}} \emph{non-minimal} (but still
non-normal) modalities. 

\medskip\noindent\textbf{Outline.}~
In Section~\ref{sec:modal-krm}, we present the Kripke resource models
which already exist in the literature. The semantics of all the
languages studied in this paper will be adequate extensions of Kripke
resource models. We first enrich the Kripke resource models with
neighbourhood functions to capture non-normal modalities. We obtain
what we simply coin modal Kripke resource models. We define and study
a minimal non-normal modal logic, \MILL. We introduce a Hilbert system
in Section~\ref{sec:hilbert} and a sequent calculus in
Section~\ref{sec:sequent-mill}; Both are shown sound and
complete. Moreover, we can easily show that the sequent calculus admits cut
elimination that provides a normal form for proofs. Proof search is
proved to be in PSPACE. 

We extend our framework in Section~\ref{sec:mill-nc} to account
for partially commutative Linear Logic that allows for integrating
commutative and non-commutative operators.

Then in Section~\ref{sec:linear-biat} we instantiate the minimal modal
logic with a resource-sensitive version of the logics of
bringing-it-about: \RSBIAT. Again, sound and complete Hilbert system
and sequent calculus are provided. Proof-search in \RSBIAT is
PSPACE-easy. We also show how \RSBIAT can be extended with the
non-commutative language and thus represent sequentiality of actions.
In Section~\ref{sec:application}, we motivate and discuss a number of
applications of our system to represent and reason about artefacts.

\section{\MILL and modal Kripke resource models}
\label{sec:modal-krm}

Let $Atom$ be a non-empty set of atomic propositions. We introduce the most basic language studied in this paper. 
The language $\mathcal{L}_{\MILL}$ is given by the BNF:
\[A ::= \textbf{1} \mid p \mid A \otimes A \mid A \with A \mid A \implies A \mid \Box A\] where $p \in Atom$.  
It is a modal version of what corresponds to the language of
propositional intuitionistic Linear Logic, but without the additive
disjunction and the additive units. This propositional part can also be seen as the fragment
of BI~\cite{PymBI1999} without additive disjunction and implication.

Let us first concentrate on the propositional part for which a
semantics already exists in the literature. We call the logic \ILL and
$\mathcal{L}_{\ILL}$ its language. A Kripke-like class of models for
\ILL is basically due to Urquhart \cite{UrquhartJSL1972}. A
\emph{Kripke resource frame} is a structure $\mathcal{M} = {(M, e,
  \circ, \geq)}$, where $(M,e,\circ)$ is a commutative monoid with
neutral element $e$, and $\geq$ is a pre-order on $M$.  The frame has
to satisfy the condition of \emph{bifunctoriality}: if $m \geq n$, and
$m' \geq n'$, then $m \circ m' \geq n \circ n'$. To obtain a
\emph{Kripke resource model}, a valuation on atoms $V: Atom \to
\mathcal{P}(M)$ is added. It has to satisfy the \emph{heredity}
condition: if $m \in V(p)$ and $n \geq m$ then $n \in V(p)$.

The truth conditions of $\mathcal{L}_{\ILL}$ in the Kripke resource
model $\mathcal{M} = {(M, e, \circ, \geq, V)}$ of the formulas of the
propositional part are the following:
\begin{description}
\item $m \models_\mathcal{M} p$ iff $m \in V(p)$.
\item $m \models_\mathcal{M} \textbf{1}$ iff $m \geq e$.
\item $m \models_\mathcal{M} A \tensor B$ iff there exist $m_{1}$ and $m_{2}$ such that $m \geq m_{1} \circ m_{2}$ and $m_{1} \models_\mathcal{M} A$ and  
$m_{2} \models_\mathcal{M} B$.
\item $m \models_\mathcal{M} A \with B$ iff $m \models_\mathcal{M} A$ and $m \models_\mathcal{M} B$.
\item $m \models_\mathcal{M} A \implies B$ iff for all $n \in M$, if $n \models_\mathcal{M} A$, then $n \circ m \models_\mathcal{M} B$.  
\end{description}

Observe that heredity can be shown to extend naturally to every
formula, in the sense that:
\begin{prop}\label{prop:propositional-heredity}
For every formula $A \in \mathcal{L}_{\ILL}$, if $m \models A$ and $m' \geq m$, then $m'
\models A$.
\end{prop}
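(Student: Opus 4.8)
The plan is to proceed by induction on the structure of the formula $A$, using the monoid structure of $(M,e,\circ)$, the fact that $\geq$ is a pre-order (so both reflexive and transitive), bifunctoriality, and the heredity condition on the valuation $V$. The statement says exactly that the truth set of every formula is upward closed under $\geq$, so the induction hypothesis is precisely ``the claim holds for all immediate subformulas''.

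For the base cases: if $A = p$ is atomic, then $m \models p$ means $m \in V(p)$, and the heredity condition on $V$ gives $m' \in V(p)$, i.e.\ $m' \models p$; if $A = \textbf{1}$, then $m \models \textbf{1}$ means $m \geq e$, and since $m' \geq m$, transitivity of $\geq$ yields $m' \geq e$, i.e.\ $m' \models \textbf{1}$. For the inductive step, the case $A = B \tensor C$ in fact needs no induction hypothesis: from $m \models B \tensor C$ we obtain witnesses $m_1,m_2$ with $m \geq m_1 \circ m_2$, $m_1 \models B$ and $m_2 \models C$; since $m' \geq m \geq m_1 \circ m_2$, transitivity gives $m' \geq m_1 \circ m_2$, so the very same witnesses show $m' \models B \tensor C$. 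The case $A = B \with C$ is immediate: $m \models B$ and $m \models C$, so by the induction hypothesis applied to $B$ and to $C$ separately we get $m' \models B$ and $m' \models C$, hence $m' \models B \with C$.

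The only case that genuinely exercises the frame conditions, and the place where one must be a little careful, is $A = B \implies C$. Assume $m \models B \implies C$ and $m' \geq m$; to show $m' \models B \implies C$, fix an arbitrary $n \in M$ with $n \models B$, and aim at $n \circ m' \models C$. From $m \models B \implies C$ we get $n \circ m \models C$. Reflexivity gives $n \geq n$, so bifunctoriality applied to $n \geq n$ and $m' \geq m$ yields $n \circ m' \geq n \circ m$; the induction hypothesis for $C$ then upgrades $n \circ m \models C$ to $n \circ m' \models C$, which is what we needed. I do not expect a real obstacle anywhere: the one thing worth getting right is the direction in which bifunctoriality is applied, and the fact that in the truth clause for $\implies$ the ambient resource $m$ sits on the \emph{right} of $\circ$, so it is $m$ (not $n$) that must be replaced by the larger $m'$.
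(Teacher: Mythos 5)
Your induction is correct and is exactly the standard argument the paper alludes to but does not spell out (the paper only proves the new $\Box$ case later, in Proposition~\ref{prop:modal-heredity}, taking the propositional cases for granted). All cases are handled properly, including the two points that actually matter: the $\otimes$ clause needs no induction hypothesis since the $\geq$ built into its truth condition absorbs the larger world by transitivity, and the $\implies$ case correctly applies bifunctoriality to $n \geq n$ and $m' \geq m$ and then the induction hypothesis for the consequent.
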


\medskip\noindent\textbf{Notations.}~
An intuitionistic negation can be added to the language. We simply
choose a designated atom $\bot \in Atom$ and decide to conventionally
interpret $\bot$ as indicating a contradiction. Negation is then
defined by means of implication as $\sim A \equiv A \implies \bot$
\cite{KanovichEtAlMSCT2006}: the occurrence of $A$ yields the
contradiction. There will be no specific rule for negation.

Given a multiset of formulas, it will be useful to combine them into a
unique formula. We adopt the following notation: $\emptyset^* =
\textbf{1}$, and $\Delta^* = A_1 \otimes \dots \otimes A_k$ when $\Delta
= \{A_1, \ldots ,A_k\}$. 

Denote $||A||^\mathcal{M}$ the extension of $A$ in $\mathcal{M}$,
i.e.~the set of worlds of $\mathcal{M}$ in which $A$ holds. A formula
$A$ is \emph{true} in a model $\mathcal{M}$ if $e
\models_\mathcal{M} A$.\footnote{When no confusion can arise we will write $||A||$ instead of $||A||^\mathcal{M}$, and $m \models A$ instead of $m \models_\mathcal{M} A$.} A formula $A$ is \emph{valid} in Kripke resource frames, noted $\models A$, iff it is true in every model.

\medskip\noindent\textbf{Modal Kripke resource models.}~
Now, to give a meaning to the modality, we define a neighbourhood
semantics on top of the Kripke resource frame. A neighbourhood function
is a mapping $N: M \to \mathcal{P}(\mathcal{P}(M))$ that associates a
world $m$ with a set of sets of worlds. (See~\cite{chellas80}.) We
define:
\begin{description}
\item $m \models \Box A$ iff $|| A || \in N(m)$
\end{description}
This is not enough, though. It is possible that $m \models \Box A$,
yet $m' \not \models \Box A$ for some $m' \geq m$. That is,
Proposition~\ref{prop:propositional-heredity} does not hold with the
simple extension of $\models$ for $\mathcal{L}_{\MILL}$. (One disastrous
consequence is that the resulting logic does not satisfy the \emph{modus
ponens} or the cut rule.) We could define the clause concerning the
modality alternatively as: $m \models \Box A$ iff there is $n \in M$,
such that $m \geq n$ and $|| A || \in N(n)$. However, it will be more agreeable to keep working with the standard definition and instead impose a condition on the models.

We will require our neighbourhood function to satisfy the condition
that if some set $X \subseteq M$ is in the neighbourhood of a world,
then $X$ is also in the neighbourhood of all ``greater''
worlds.\footnote{An analogous yet less transparent condition was used
  in~\cite{DagostinoEtAlSL1997} for a normal modality.} Formally, our
modal Linear Logic is evaluated over the following models:
\begin{definition}\label{def:mkrm}
A \emph{modal Kripke resource model} is a structure $\mathcal{M} = (M,
e, \circ, \geq, N, V)$ such that:
\begin{itemize}
\item $(M, e, \circ, \geq)$ is a Kripke resource frame;
\item $N$ is a neighbourhood function such that:
\begin{equation}
\label{princ:heredityN}
\text{if } X \in N(m) \text{ and } n \geq m \text{ then } X \in N(n)
\end{equation}
\end{itemize}
\end{definition}
It is readily checked that
Proposition~\ref{prop:propositional-heredity} is true as well for
$\mathcal{L}_{\MILL}$ over modal Kripke resource models for modal
formulas.
We thus have:
\begin{prop}\label{prop:modal-heredity}
For every formula $A \in \mathcal{L}_{\MILL}$, if $m \models A$ and
$m' \geq m$, then $m' \models A$.
\end{prop}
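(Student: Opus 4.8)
The plan is to prove the statement by structural induction on $A \in \mathcal{L}_{\MILL}$. For the cases where $A$ is an atom, $\textbf{1}$, or is built with $\otimes$, $\with$, or $\implies$, the argument is verbatim the one underlying Proposition~\ref{prop:propositional-heredity} for $\mathcal{L}_{\ILL}$: the induction hypothesis applies to the proper subformulas, which all belong again to $\mathcal{L}_{\MILL}$. The only genuinely new case is $A = \Box B$, and this is exactly where condition~(\ref{princ:heredityN}) of Definition~\ref{def:mkrm} is used.

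In more detail: if $A = p$, then $m \models p$ means $m \in V(p)$, and heredity of $V$ gives $m' \in V(p)$. If $A = \textbf{1}$, then $m \geq e$, and transitivity of $\geq$ together with $m' \geq m$ gives $m' \geq e$. If $A = B \otimes C$, pick witnesses $m_1, m_2$ with $m \geq m_1 \circ m_2$, $m_1 \models B$, $m_2 \models C$; by transitivity $m' \geq m_1 \circ m_2$, so the same witnesses establish $m' \models B \otimes C$. If $A = B \with C$, apply the induction hypothesis separately to $B$ and to $C$. If $A = B \implies C$, let $n \models B$ be arbitrary; then $n \circ m \models C$, and since $m' \geq m$ and $n \geq n$, bifunctoriality yields $n \circ m' \geq n \circ m$, so the induction hypothesis on $C$ gives $n \circ m' \models C$; as $n$ was arbitrary, $m' \models B \implies C$. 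The implication case is the only one needing a moment's care, namely the appeal to bifunctoriality before invoking the induction hypothesis on $C$.

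Finally, for $A = \Box B$: by definition $m \models \Box B$ iff $||B|| \in N(m)$, and the set $||B||$ does not depend on the evaluation world, so no induction hypothesis is needed. Since $m' \geq m$, condition~(\ref{princ:heredityN}) immediately gives $||B|| \in N(m')$, i.e.\ $m' \models \Box B$. I do not expect a real obstacle: the whole purpose of building~(\ref{princ:heredityN}) into modal Kripke resource models was to make precisely this step work, so the proof is a routine extension of Proposition~\ref{prop:propositional-heredity}.
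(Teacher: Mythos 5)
Your proof is correct and follows essentially the same route as the paper: the decisive step is the $\Box B$ case, where condition~(\ref{princ:heredityN}) directly transfers $||B|| \in N(m)$ to $N(m')$, exactly as in the paper's argument. The paper simply delegates the propositional cases to Proposition~\ref{prop:propositional-heredity} rather than re-running the induction, but your spelled-out cases (including the bifunctoriality step for $\implies$) are accurate.
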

\begin{proof}
\dnote{We handle the new case $A = \Box B$. Suppose $m \models \Box B$ and
$m'\geq m$. By definition, $||B|| \in N(m)$, and thus $||B|| \in N(m')$
follows from condition (\ref{princ:heredityN}). Therefore $m' \models
  \Box B$.}
\end{proof}

\section{Hilbert system for \MILL and soundness}
\label{sec:hilbert}

\begin{table}[h]
\begin{center}
\framebox{
\begin{tabular}{l}
$A \implies A$\\
$(A \implies B) \implies ((B \implies C) \implies (A \implies C))$\\
$(A \implies (B \implies C)) \implies (B \implies (A \implies C))$\\
$A \implies (B \implies A \otimes B)$\\
$(A \implies (B \implies C)) \implies (A \otimes B \implies C)$\\
$\textbf{1}$\\
$\textbf{1} \implies (A \implies A)$\\
$(A \with B) \implies A$\\
$(A \with B) \implies B$\\
$((A \implies B) \with (A \implies C)) \implies (A \implies B \with C)$\\\\
\end{tabular} }
\end{center}
\caption{\label{table-hmill} Axiom schemata in \HMILL}
\end{table}





This part extends the Hilbert system for \ILL from \cite{Troelstra1992} and
\cite{AvronTCS1988}.
We define the Hilbert-style calculus $\HMILL$ for \MILL by defining the following notion of deduction. 
\begin{definition}[Deduction in \HMILL]\label{def:derivation-hmill}
A \emph{deduction tree} in \HMILL  $\mathcal{D}$ is inductively constructed as follows. (i)~The leaves of the tree are assumptions $A \t_\H A $, for $A \in \mathcal{L}_\MILL$, or
$\t_\H B$ where $B$ is an axiom in Table~\ref{table-hmill} (base cases).\\
 (ii) We denote by $\stackrel{\mathcal{D}}{\Gamma \t_\H A}$ a deduction tree with conclusion $\Gamma \t_\H A$. If $\mathcal{D}$ and $\mathcal{D}'$ are deduction trees, then the following are deduction trees (inductive steps). 

\begin{center}
\begin{tabular}{cc}
$
\AC{\stackrel{\mathcal{D}}{\Gamma \t_\H A}}
\AC{\stackrel{\mathcal{D}'}{\Gamma' \t_\H A \implies B}} \rl{$\implies$-rule}
 \BC{\Gamma, \Gamma' \t_\H  B}
 \dip
$

& 

$
\AC{\stackrel{\mathcal{D}}{\Gamma \t_\H A}}
\AC{\stackrel{\mathcal{D}'}{\Gamma \t_\H B}} \rl{$\with$-rule}
\BC{\Gamma \t_\H A \with B}
\dip
$
\\
\end{tabular}
$$
 \AC{\stackrel{\mathcal{D}}{\vdash_\H A \implies B}}
 \AC{\stackrel{\mathcal{D}'}{\vdash_\H B \implies A}}
 \rl{$\Box$(re)}
 \BC{\t_\H \Box A \implies \Box B}
 \dip
$$
\end{center}

\end{definition}
We sometimes refer to the $\implies$-rule as \emph{modus ponens}. 
We say that $A$ is deducible from $\Gamma$ in \HMILL and we write $\Gamma \t_{\HMILL} A$ iff there exists a deduction tree in \HMILL with conclusion $\Gamma \t_\H A$. The deduction without assumptions, i.e. $\t_\HMILL A$, is just a special case of the above definition.

In general when defining Hilbert systems for linear logics, we need to
be careful in the definition of derivation from assumptions: since
$\Gamma$ is a multiset, we need to handle occurrences of hypothesis
when applying for instance modus ponens.  From
Definition~\ref{def:derivation-hmill}, every occurrence of assumptions or axioms in a derivation, except for
the conclusion, is used exactly once by an application of modus
ponens~\cite{AvronTCS1988}.  With respect to this notion of
derivation, the deduction theorem holds.
 \begin{theorem}[Deduction theorem for \HMILL]\label{thm:deduction} If ${\Gamma, A \vdash_\HMILL B}$ then\linebreak ${\Gamma \vdash_\HMILL A \implies B}$.  
 \end{theorem}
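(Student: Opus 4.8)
The standard approach is induction on the structure of the deduction tree witnessing $\Gamma, A \vdash_\HMILL B$. The subtlety, as the authors themselves flag, is that contexts are multisets and each occurrence of an assumption must be discharged exactly once by a modus ponens application, so "the" occurrence of $A$ we are abstracting must be tracked carefully through the tree.

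Let me sketch how I would organize it.

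First I would set up the induction. Consider a deduction tree $\mathcal{D}$ with conclusion $\Gamma, A \vdash_\H B$, where we single out one distinguished occurrence of the hypothesis $A$ in the multiset $\Gamma, A$.

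\emph{Base cases.} If $\mathcal{D}$ is a leaf, it is either an axiom $\vdash_\H B$ (no hypotheses, so this case cannot single out an occurrence of $A$ — it only arises when $\Gamma, A$ is empty, which contradicts the presence of $A$), or an assumption leaf $A \vdash_\H A$, in which case $B = A$ and $\Gamma = \emptyset$; we must produce $\vdash_\H A \implies B = A \implies A$, which is exactly the first axiom of Table~\ref{table-hmill}.

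\emph{Inductive step for the $\implies$-rule.} Here $\mathcal{D}$ ends with an application of modus ponens combining $\Gamma_1 \vdash_\H C$ and $\Gamma_2 \vdash_\H C \implies B$, with $\Gamma_1, \Gamma_2 = \Gamma, A$. The distinguished occurrence of $A$ lies in exactly one of the two premise contexts. Say it is in $\Gamma_1$, so $\Gamma_1 = \Gamma_1', A$ and $\Gamma_2 = \Gamma_2'$ with $\Gamma_1', \Gamma_2' = \Gamma$. By the induction hypothesis on the left premise, $\Gamma_1' \vdash_\H A \implies C$. Now I need to combine this with $\Gamma_2' \vdash_\H C \implies B$ to obtain $\Gamma_1', \Gamma_2' \vdash_\H A \implies B$; this is a purely propositional manipulation, driven by the prefixing/transitivity axiom $(A \implies C) \implies ((C \implies B) \implies (A \implies B))$ together with two modus ponens steps. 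The symmetric subcase (distinguished $A$ in $\Gamma_2$) is analogous, using additionally the exchange-of-antecedents axiom $(X \implies (Y \implies Z)) \implies (Y \implies (X \implies Z))$ to move the $A$-abstraction past the remaining antecedent.

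\emph{Inductive step for the $\with$-rule.} Here $\mathcal{D}$ ends with $\Gamma, A \vdash_\H B_1$ and $\Gamma, A \vdash_\H B_2$ yielding $\Gamma, A \vdash_\H B_1 \with B_2$. The distinguished occurrence of $A$ — caution: this rule shares the same context $\Gamma$ in both premises, so the multiset $\Gamma, A$ occurring below corresponds to a \emph{single} discharge, and the induction hypothesis applies to each premise with respect to the corresponding occurrence, giving $\Gamma \vdash_\H A \implies B_1$ and $\Gamma \vdash_\H A \implies B_2$. Applying the $\with$-rule yields $\Gamma \vdash_\H (A \implies B_1) \with (A \implies B_2)$, and then the last axiom of Table~\ref{table-hmill}, $((A \implies B_1) \with (A \implies B_2)) \implies (A \implies B_1 \with B_2)$, with one modus ponens, gives the result.

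\emph{Inductive step for the $\Box$(re)-rule.} This rule has premises and conclusion with empty context, so it never discharges a hypothesis; if it is the last rule of $\mathcal{D}$, then $\Gamma, A$ is empty, again contradicting the presence of $A$. So there is nothing to do.

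\textbf{Main obstacle.} The real care is bookkeeping rather than mathematical depth: making precise, in the multiset setting, the notion of "the distinguished occurrence of $A$" and verifying that in the $\implies$-rule it descends into exactly one premise while in the $\with$-rule the shared-context discipline makes it descend into both premises as a single occurrence. Getting the combinatorics of occurrences right — so that the induction hypothesis is applied to precisely the right subderivation with precisely the right context — is where a sloppy argument would break. Once that is pinned down, each case reduces to a short derivation using the implicational axioms of Table~\ref{table-hmill}, which I would not spell out in full.
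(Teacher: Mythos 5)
Your proof is correct and follows essentially the same route as the paper: the paper simply cites Troelstra's proof of the deduction theorem for the propositional fragment \ILL (which is exactly the induction on deduction trees you spell out for the leaves, the $\implies$-rule and the $\with$-rule, using the transitivity, antecedent-exchange and $\with$-distribution axioms) and then observes, just as you do, that the only new case, $\Box$(re), is vacuous because its premises and conclusion have empty contexts and so it never discharges the distinguished hypothesis.
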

\begin{proof}
The deduction theorem holds for the propositional fragment \ILL, see~\cite[pp.~66-68]{Troelstra1992}. The only case to consider is the rule $\Box$(re). However this trivially holds because the contexts of the sequents in the rule are empty.
\end{proof}

\begin{remark}\label{rem:rulesre} We defined the rule $\Box$(re) in that particular way because of the deduction theorem. With the above formulation, the deduction theorem is preserved in \MILL.
Moreover, by our definition of deduction tree, this version entails the rule $\Box$(re'): from two deductions trees with conclusion $A \vdash_\H B$ and $B \vdash_\H A$, we can build a deduction tree with conclusion $\Box A \vdash_\H \Box B$.


We claim that $\Box$(re) implies $\Box$(re'). 
Assume the premises of $\Box$(re'): there are two deduction trees with
conclusions $A \vdash_\H B$ and $B\vdash_\H A$.  In virtue of
Theorem~\ref{thm:deduction} and the definition of $\vdash_\HMILL$, we
know there are two trees with conclusions $\vdash_\H A \implies B$ and
$\vdash_H B \implies A$. Thus by $\Box$(re), we can build a deduction
tree with conclusion $\vdash_\H \Box A \implies \Box B$. Together with
$\Box A \vdash_\H \Box A$ (a leaf), the $\implies$-rule gives us a
deduction tree with conclusion $\Box A \vdash_\H \Box B$.

With the version $\Box$(re'), the deduction theorem fails, as we do not
have any axiom that talks about the modality in this case.
\end{remark}

We can prove the soundness of \HMILL wrt.~our semantics.
\begin{theorem}[Soundness of \HMILL]\label{thm:soundness-hmill} If ${\Gamma \vdash_{\HMILL} A}$ then, for every model, $\Gamma \models A$ (namely, $e \models (\Gamma)^{*} \implies A$).
\end{theorem}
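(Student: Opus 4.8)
The plan is to prove soundness by induction on the structure of the deduction tree in \HMILL, following the standard recipe for Hilbert-style systems. Observe first that it suffices to show $e \models (\Gamma)^{*} \implies A$ in every model, since by the truth conditions for $\implies$ and $\tensor$ this is equivalent to: for all $m$, if $m \models (\Gamma)^{*}$ then $m \models A$, and unfolding $(\Gamma)^{*}$ this amounts to $\Gamma \models A$ in the sense that whenever the resources interpreting the formulas of $\Gamma$ are present, so is $A$. So I will prove the statement in the form: for every model $\mathcal{M}$ and every deduction tree with conclusion $\Gamma \t_\H A$, we have $e \models (\Gamma)^{*} \implies A$.

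\medskip

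First I would handle the base cases. For a leaf $A \t_\H A$, I must check $e \models A \implies A$, which is immediate from the truth condition for $\implies$ together with the fact that $e$ is the neutral element of $\circ$ and heredity (Proposition~\ref{prop:modal-heredity}) to absorb the pre-order slack. For a leaf $\t_\H B$ with $B$ an axiom of Table~\ref{table-hmill}, I must verify that each axiom schema is valid; this is a finite list of routine semantic checks. The characteristic implicational axioms follow from the monoid structure and bifunctoriality; the $\tensor$-axioms from its truth condition; the $\with$-axioms from the fact that $m \models A \with B$ iff $m \models A$ and $m \models B$; and $\textbf{1}$ is true at $e$ since $e \geq e$. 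The distribution axiom $((A \implies B) \with (A \implies C)) \implies (A \implies B \with C)$ follows by unpacking both $\implies$ and $\with$ conditions.

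\medskip

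Next the inductive steps. For the $\implies$-rule, from the induction hypotheses $e \models (\Gamma)^{*} \implies A$ and $e \models (\Gamma')^{*} \implies (A \implies B)$ I need $e \models (\Gamma,\Gamma')^{*} \implies B$; here the key point is that $(\Gamma,\Gamma')^{*}$ is (equivalent to) $(\Gamma)^{*} \tensor (\Gamma')^{*}$, so a point forcing it decomposes via $\circ$ into a point forcing $(\Gamma)^{*}$ and one forcing $(\Gamma')^{*}$; feeding these through the hypotheses and applying the $\implies$ truth condition, together with commutativity/associativity of $\circ$ and heredity, yields $B$. The $\with$-rule is straightforward from the truth condition for $\with$. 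The interesting case is $\Box$(re): the contexts are empty, so I have $\models A \implies B$ and $\models B \implies A$, hence by Proposition~\ref{prop:modal-heredity} (or directly) $\|A\|^{\mathcal{M}} = \|B\|^{\mathcal{M}}$ in every model; then for any $m$, $m \models \Box A$ iff $\|A\| \in N(m)$ iff $\|B\| \in N(m)$ iff $m \models \Box B$, so $\models \Box A \implies \Box B$. I expect the main obstacle to be purely bookkeeping: managing the translation between the multiset $\Gamma$ and the formula $(\Gamma)^{*}$ in the $\implies$-rule case and making sure the pre-order ``$\geq$'' slack in the $\tensor$ truth condition is correctly discharged using heredity at each step. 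None of the steps is genuinely deep; the only conceptual ingredient beyond routine verification is that $\Box$(re) is sound precisely because it only asks for equi-provability of $A$ and $B$ with empty contexts, which forces $\|A\| = \|B\|$ and hence preserves membership in every neighbourhood.
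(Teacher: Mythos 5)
Your proposal is correct and follows essentially the same route as the paper: induction on the deduction tree, with the two non-trivial cases handled exactly as in the paper's sketch---the $\implies$-rule via the $\tensor$-decomposition of the combined context together with heredity (Proposition~\ref{prop:modal-heredity}), and $\Box$(re) by observing that the empty-context premises force $||A|| = ||B||$, so neighbourhood membership is preserved. The paper only spells out these two representative cases, while you additionally enumerate the routine base cases and the $\with$-rule, but there is no substantive difference in method.
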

\begin{proof}

We only give the arguments of the proof of soundness for two representative cases.\\

\emph{Soundness of $\implies$-rule}. We show now that $\implies$-rule preserves validity. Namely, we prove, by induction on the length of the derivation tree that 
if (1)~$e \models \Gamma \implies  A$ and (2)~$e \models \Gamma' \implies (A \implies B)$, then $e \models \Gamma \otimes \Gamma' \implies B$.
The first assumption entails that for all $x$, if $x \models \Gamma$, then $x \models A$. 
The second assumption entails that for all $y$, if $y \models \Gamma'$, then $y \models A \implies B$. Thus, for all $t$, if $t \models A$, then $y \circ t \models B$. \par
Let $z \models \Gamma \otimes \Gamma'$, thus there exist $z_{1}$ and $z_{2}$ such that $z_{1} \models \Gamma$, $z_{2} \models \Gamma'$ and 
$z \geq z_{1} \circ z_{2}$. By (1), we have that $z_{1} \models A$ and, by (2), $z_{2} \models A \implies B$, thus $z_{1} \circ z_{2} \models B$, and by Proposition~\ref{prop:modal-heredity} we have that $z \models B$.\footnote{Remember that Proposition~\ref{prop:modal-heredity} holds for the modal language $\mathcal{L}_{\MILL}$ because of Condition~(\ref{princ:heredityN}) on modal Kripke resource models.}
\\

\noindent
\emph{Soundness of $\Box$(re)}. We show that $\Box$(re) preserves  validity, namely, if $e \models A \implies B$ and $e \models B \implies A$, then $e \models \Box A \implies \Box B$. Our assumptions imply that, for all $x$, if $x \models A$, then $x \models B$, and if $x \models B$ then $x \models A$. Thus, 
$||A|| = ||B||$. We need to show that for all $x$, if $x \models \Box A$, then $x \models \Box B$. 
By definition,  $x \models \Box A$ iff $||A|| \in N(x)$. Thus, since $||A|| = ||B||$, we have that $||B|| \in N(x)$, that means  $x \models \Box B$.
\end{proof}

\section{Sequent calculus \MILL and completeness}
\label{sec:sequent-mill}

In this section, we introduce the sequent calculus for our logic.  A \emph{sequent} is a statement $ \Gamma \t A$ where $\Gamma$ is a finite multiset of occurrences of formulas of \MILL and $A$ is a formula. The fact that we allow for a single formula in the conclusions of the sequent corresponds to the fact that we are working with the intuitionistic version of the calculus~\cite{Girard1987}.    

\begin{table}[h]
\small
\begin{center}

\begin{tabular}[]{cc}
\AxiomC{} \RightLabel{\footnotesize ax} \UnaryInfC{$A \vdash A$} \DisplayProof   & \AxiomC{$\Gamma, A
\vdash C$} \AxiomC{$\Gamma' \vdash A $} \RightLabel{\footnotesize cut} \BinaryInfC{$\Gamma,
\Gamma' \vdash C$} \DisplayProof
\\
\end{tabular}
\begin{tabular}[]{cc}
&  \\
\AxiomC{$\Gamma, A, B \vdash C$} \RightLabel{$\otimes$L} \UnaryInfC{$\Gamma, A \otimes B \vdash
C$} \DisplayProof  &  \AxiomC{$\Gamma \vdash A $} \AxiomC{$\Gamma' \vdash B$}
 \RightLabel{$\otimes$R} \BinaryInfC{$\Gamma, \Gamma' \vdash A\otimes B$} \DisplayProof      \\
& \\
\AxiomC{$\Gamma \vdash A$} \AxiomC{$\Gamma', B \vdash C$} \RightLabel{$\multimap$L}
\BinaryInfC{$\Gamma', \Gamma, A\multimap B \vdash C$} \DisplayProof  &

\AxiomC{$\Gamma, A \vdash B$} \RightLabel{$\multimap$R}
\UnaryInfC{$\Gamma \vdash A \multimap B$}   \DisplayProof \\
& \\
\AxiomC{$\Gamma, A, \Gamma' \vdash C$} \RightLabel{$\with$L} \UnaryInfC{$\Gamma, A \with B, \Gamma' \vdash
C$} \DisplayProof  &    \AxiomC{$\Gamma, B, \Gamma' \vdash C$} \RightLabel{$\with$L} \UnaryInfC{$\Gamma, A \with B, \Gamma' \vdash
C$} \DisplayProof    \\
& \\
\end{tabular}

\begin{tabular}[]{ccc}
\AxiomC{$\Gamma \vdash A $} \AxiomC{$\Gamma \vdash B$}
 \RightLabel{$\with$R} \BinaryInfC{$\Gamma \vdash A\with B$} \DisplayProof & \AC{\Gamma \t C} \rl{$\textbf{1}$L} \UC{\Gamma, \textbf{1} \t C} \DisplayProof &
\AC{} \rl{$\textbf{1}$R} \UC{\t \textbf{1}} \DisplayProof  
\\
\end{tabular}
\end{center}
\begin{center}
\AC{A \t B}
\AC{B \t A}\rl{$\Box$(re)}
\BC{\Box A \t \Box B}
\dip
\end{center}

\caption{Sequent calculus \MILL \label{table:mill}}
\end{table}

Since in a sequent $\Gamma \vdash A$ we identify $\Gamma$ to a
multiset of formulas, the exchange rule---the reshuffling of
$\Gamma$---is implicit.

A sequent $\Gamma \t A$ where $\Gamma = A_{1}, \ldots,
A_{n}$ is \emph{valid} in a modal Kripke resource frame iff the
formula $A_{1} \tensor \dots \tensor A_{n} \implies A$ is valid, namely $\models \Gamma^{*} \implies A$. 


We obtain the sequent calculus for our minimal modal logic \MILL by extending the language of \ILL with modal formulas and by adding a new rule. Saturating the notation, we label the sequent calculus rule like the rule for equivalents in the Hilbert system: $\Box$(re). The calculus is shown in Table~\ref{table:mill}.
 
To establish a link with the previous section, we first show that provability in sequent calculus is equivalent to provability in the Hilbert system. 
\begin{theorem}\label{sequenthilbert} It holds that $\Gamma \t_{\HMILL} A$ iff the sequent $\Gamma \t A$ is derivable in the sequent calculus for \MILL. 
\end{theorem}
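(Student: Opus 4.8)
The plan is to prove the two directions separately by induction on the structure of derivations, exploiting the deduction theorem (Theorem~\ref{thm:deduction}) to bridge the gap between the Hilbert system's single-formula-on-the-left style of reasoning and the sequent calculus's use of arbitrary multiset contexts.

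\medskip\noindent\textbf{From Hilbert to sequent calculus.}~
Suppose $\Gamma \vdash_{\HMILL} A$. I would proceed by induction on the structure of the deduction tree $\mathcal{D}$. For the base cases, a leaf $A \vdash_\H A$ is exactly the axiom \textsf{ax}, and for each axiom schema $\vdash_\H B$ of Table~\ref{table-hmill} I must exhibit a cut-free (or at least a valid) sequent derivation of ${}\vdash B$; this is the routine but slightly tedious part, done once and for all for the ten schemata, using $\multimap$R, $\otimes$L/R, $\with$L/R and $\textbf{1}$R. For the inductive steps: the $\implies$-rule is simulated by a $\multimap$L followed by \textsf{cut} (from $\Gamma \vdash A$ and $\Gamma' \vdash A \multimap B$, apply $\multimap$L to $A \vdash A$ and... more directly: cut $\Gamma \vdash A$ against $\Gamma', A \vdash B$, the latter obtained from $\Gamma' \vdash A\multimap B$ and $A \vdash A$ via $\multimap$L); the $\with$-rule is precisely $\with$R; and the $\Box$(re) rule of the Hilbert system maps directly onto the $\Box$(re) rule of the sequent calculus, since by the induction hypothesis the premises $\vdash_\H A \multimap B$ and $\vdash_\H B \multimap A$ give sequent proofs of ${}\vdash A \multimap B$ and ${}\vdash B \multimap A$, whence $A \vdash B$ and $B \vdash A$ follow by one \textsf{cut} each against \textsf{ax} (or by inverting $\multimap$R), and then $\Box$(re) yields $\Box A \vdash \Box B$, which with \textsf{cut} against $\Box A \vdash \Box A$ reattached to the context is the desired $\vdash \Box A \multimap \Box B$ if needed. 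Care must be taken with multiset bookkeeping so that contexts combine exactly as the $\implies$-rule dictates.

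\medskip\noindent\textbf{From sequent calculus to Hilbert.}~
Conversely, suppose $\Gamma \vdash A$ is derivable in the sequent calculus. I would argue by induction on the sequent derivation, showing $\Gamma \vdash_{\HMILL} A$ at each step. The axiom \textsf{ax} is a Hilbert leaf. The key tool for the logical rules is the deduction theorem together with its converse (modus ponens internalising implication): e.g.\ for $\multimap$R, from $\Gamma, A \vdash_\HMILL B$ we get $\Gamma \vdash_\HMILL A \multimap B$ directly by Theorem~\ref{thm:deduction}; for $\multimap$L, from $\Gamma \vdash_\HMILL A$ and $\Gamma', B \vdash_\HMILL C$, the deduction theorem gives $\Gamma' \vdash_\HMILL B \multimap C$, and then two applications of modus ponens (using the transitivity/permutation axioms to compose) yield $\Gamma', \Gamma, A \multimap B \vdash_\HMILL C$. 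The $\otimes$ rules are handled using the axioms $A \implies (B \implies A\otimes B)$ and $(A\implies(B\implies C))\implies(A\otimes B\implies C)$ to pack and unpack tensors; the $\with$ rules use the projection axioms and the $\with$-rule; the $\textbf{1}$ rules use the axiom $\textbf{1}$ and $\textbf{1}\implies(A\implies A)$; and \textsf{cut} is just a combination of the deduction theorem and modus ponens. Finally, the sequent rule $\Box$(re) with premises $A \vdash B$ and $B \vdash A$: by induction $A \vdash_\HMILL B$ and $B \vdash_\HMILL A$, hence by $\Box$(re') (established in Remark~\ref{rem:rulesre} from $\Box$(re)) we obtain $\Box A \vdash_\HMILL \Box B$.

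\medskip\noindent\textbf{Main obstacle.}~
I expect the genuinely delicate point to be the faithful treatment of multiset contexts in the Hilbert-to-sequent direction: the $\implies$-rule of \HMILL consumes one occurrence of an assumption exactly once, and matching this against the way \textsf{cut} and $\multimap$L split and merge contexts requires keeping precise track of occurrences (this is exactly the subtlety flagged in the discussion after Definition~\ref{def:derivation-hmill}, citing~\cite{AvronTCS1988}). The modal rule, by contrast, is painless in both directions because $\Box$(re) operates only on context-free sequents, so it transfers almost verbatim; the real work is entirely on the propositional fragment, for which one may also simply cite the classical equivalence of Hilbert and sequent presentations of \ILL from \cite{Troelstra1992,AvronTCS1988} and only spell out the new $\Box$ case.
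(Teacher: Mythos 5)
Your proposal is correct and follows essentially the same route as the paper: the propositional equivalence is delegated to (or reconstructed from) \cite{AvronTCS1988,Troelstra1992}, and the new $\Box$(re) case is handled by induction in both directions, using the sequent rule together with $\multimap$R in one direction and the derived Hilbert rule $\Box$(re$'$) of Remark~\ref{rem:rulesre} in the other. The only immaterial difference is that you turn $\vdash A \multimap B$ into $A \vdash B$ inside the sequent calculus via cut, whereas the paper performs that conversion at the Hilbert level (obtaining $A \vdash_{\HMILL} B$) before invoking the induction hypothesis.
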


\begin{proof}  
(Sketch) The propositional cases are proved in~\cite{AvronTCS1988}, we
only need to extend it for the case of the modal rules. This is done
by induction on the length of the proof. For instance, in one
direction, assume that a derivation $\mathcal{D}$ of $\vdash_H \Box
A \implies \Box B$ is obtained by $\mathcal{D'}$ of $\vdash_H A \implies
B$ and $\mathcal{D''}$ of $\vdash_H B \implies A$. 
By definition of deduction and of $\vdash_\HMILL$,
we have that $A \vdash_{\HMILL} B$ and
$B \vdash_{\HMILL} A$. By induction hypothesis, we have that $A \vdash
B$ and $B \vdash A$ are provable in the sequent calculus. Thus, by
$\Box$(re) and $\implies R$, we have that $\t \Box A \implies \Box B$
is provable in the sequent calculus.
\end{proof}

Crucially, the modal extension does not affect cut elimination. Cut elimination holds for Linear Logic \cite{Girard1987}.  The proof for \MILL largely adapts the proof for Linear Logic \cite{Troelstra1992}. Recall that the \emph{rank} of a cut is the complexity of the cut formula. The \emph{cutrank} of a proof is the maximum of the ranks of the cuts in the proof. The \emph{level} of a cut is the length of the subproof ending in the cut, \cite{Troelstra1992}. 
The proof of cut elimination proceeds by induction on the cutrank of the proof. It is enough to assume that the occurrence of the cut with maximal rank is the last rule of the proof. The inductive step shows how to replace a proof ending in a cut with rank $n$  with a proof with the same conclusion and smaller cutrank. The proof of the inductive step proceeds by induction on the level of the terminal cut, namely on the length of the proof. 
Note that if one of the premises of a cut is an axiom, then we can simply eliminate the cut.
Given a sequent rule $R$, the occurrence of a formula $A$ in the conclusion of $R$ is principal if $A$ has been introduced by $R$.

\begin{theorem}\label{thm:cut-elim} Cut elimination holds for \MILL.
\end{theorem}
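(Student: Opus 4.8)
The plan is to follow the standard Gentzen-style cut-elimination argument as laid out in the text immediately preceding the statement, and to show that the single new rule $\Box$(re) poses no new obstacle. Concretely, I would prove the theorem by a double induction: the outer induction on the \emph{cutrank} of the proof, and the inner induction on the \emph{level} of a topmost cut of maximal rank (equivalently, on the length of the subproof ending in that cut). As noted in the excerpt, it suffices to treat a proof whose last rule is a cut of maximal rank $n$ and whose two immediate subproofs are cut-free (or have smaller cutrank); the goal of the inductive step is to transform it into a proof of the same sequent with cutrank $< n$, which the outer induction then finishes.

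For the inner induction I would split into the usual cases according to how the cut formula is introduced in the two premises. First, the \emph{axiom cases}: if one premise is an instance of \textbf{ax} (or $\textbf{1}$R against $\textbf{1}$L), the cut is eliminated outright, as the text already remarks. Second, the \emph{commutative cases}, where the cut formula is not principal in at least one premise: here one permutes the cut upward past the last rule of that premise, strictly decreasing the level while keeping the rank the same, so the inner induction hypothesis applies. These are exactly the cases handled in Troelstra's treatment of \ILL \cite{Troelstra1992}, and they go through verbatim, since the propositional rules of \MILL are unchanged; I would simply cite that and not reproduce the casework. Third, the \emph{principal cases}, where the cut formula is principal in both premises: for each connective $\tensor$, $\with$, $\multimap$, $\textbf{1}$ one replaces the cut on $A \mathbin{\star} B$ by one or two cuts on the immediate subformulas $A$, $B$, each of strictly smaller rank, so the cutrank drops and we invoke the outer induction hypothesis. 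Again these are standard.

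The only genuinely new case is when the cut formula is $\Box A$ and it is principal in both premises, so both premises are conclusions of $\Box$(re). Then the proof ends with
$$
\AC{A \t B}\AC{B \t A}\rl{$\Box$(re)}\BC{\Box A \t \Box B}
\AC{B \t A}\AC{A \t B}\rl{$\Box$(re)}\BC{\Box B \t \Box A}
\rl{cut}\BC{\Box A \t \Box A}\dip
$$
wait---more precisely the cut formula $\Box B$ sits between a $\Box$(re) deriving $\Box A \t \Box B$ on the left and a $\Box$(re) deriving $\Box B \t \Box C$ on the right, yielding $\Box A \t \Box C$. But the premises of those two applications are exactly $A \t B$, $B\t A$ and $B\t C$, $C \t B$, which are cut-free subproofs. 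From $A \t B$ and $B \t C$ we get $A \t C$ by a cut on $B$, and from $C \t B$ and $B \t A$ we get $C \t A$ by a cut on $B$; these two new cuts have rank equal to the complexity of $B$, which is strictly less than the complexity of $\Box B$, hence strictly less than $n$. Applying $\Box$(re) to $A \t C$ and $C \t A$ then gives $\Box A \t \Box C$ directly, with a proof of cutrank $< n$. I expect this $\Box$-against-$\Box$ case to be the ``main obstacle'', but as the computation shows it is entirely routine: the key structural fact making it work is that $\Box$(re) has empty contexts, so no contraction or weakening of a context is ever needed to merge the two premises. With all cases dispatched, the double induction yields a cut-free proof of any provable sequent, establishing the theorem.
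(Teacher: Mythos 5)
Your proposal is correct and follows essentially the same argument as the paper: a double induction on cutrank and level, permuting non-principal cuts upward, and reducing the principal $\Box$(re)-against-$\Box$(re) case by replacing the cut on the boxed formula with two cuts on its immediate subformula before re-applying $\Box$(re), exactly as in the paper's reduction. The only blemish is the first, mis-stated display (which you immediately correct); the corrected reduction matches the paper's.
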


\begin{proof}(Sketch) 
As usual, there are two main cases to consider: first, the cut formula is principal in both premises of the terminal cut rule, and second, that is not the case.

\noindent
\emph{First main case}.  We replace the cut of maximal rank with two cuts with strictly smaller rank. 
\
For example, take the case in which $\Box C$ is the cut formula and is principal in both premises (i.e. it has been introduced by $\Box$(re)):  
\[
\AC{B \t C}
\AC{C \t B}
\rl{$\Box$(re)}
\BC{\Box B \t \Box C}
\AC{C \t D}
\AC{D \t C}
\rl{$\Box$(re)}
\BC{\Box C \t \Box D}
\rl{cut}
\BC{\Box B \t \Box D}
\dip
\]
It is reduced by replacing the cut on $\Box C$ by two cuts on $C$ with strictly smaller rank.  
\[
\AC{B \t C}
\AC{C \t D}
\rl{cut}
\BC{B \t D}
\AC{D \t C}
\AC{C \t B}
\rl{cut}
\BC{D \t B}
\rl{$\Box$(re)}
\BC{\Box B \t \Box D}
\dip
\]
\emph{Second main case}. The cut formula is not principal in one of the premises of the cut rule. Suppose $R$ is the rule that does not introduce the cut formula. In this case, we can apply the cut after $R$. By induction, the proof minus $R$ can be turned into a cut-free proof, since the length of the subproof is smaller and the cutrank is equal or smaller.  By applying again $R$ to the cut-free proof,  we obtain a proof with the same conclusion of the starting proof and less cuts.

For example,

\[
\AC{B \t C}
\AC{C \t B}
\rl{$\Box$(re)}
\BC{\Box B \t \Box C}
\AC{...}
\rl{R}
\UC{\Gamma, \Box C \t A}
\rl{cut}
\BC{\Box B, \Gamma \t A}
\dip
\]

can be turned into:

\[
\AC{B \t C}
\AC{C \t B}
\rl{$\Box$(re)}
\BC{\Box B \t \Box C}
\AC{...}
\rl{}
\UC{\Gamma', \Box C \t A'}
\rl{cut}
\BC{\Box B, \Gamma' \t A'}
\AC{\Gamma'' \t A''}
\rl{R}
\BC{\Box B, \Gamma \t A}
\dip
\]

\end{proof}

By inspecting the rules others than cut, it is easy to see that cut elimination entails the subformula property, namely if $\Gamma \t A$ is derivable,
then there is a derivation containing subformulas of $\Gamma$ and $A$ only. 

The decidability remains to be established. We can show that the
proof-search for \MILL is no more costly in terms of space than the proof-search for
propositional intuitionistic multiplicative additive L.inear
Logic~\cite{LincolnMitchellScedrov92}.
\begin{theorem}\label{thm:PSPACE} Proof search complexity for \MILL is in PSPACE.
\end{theorem}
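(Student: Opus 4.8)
The plan is to adapt the standard PSPACE upper bound argument for propositional intuitionistic multiplicative additive Linear Logic (IMALL) from~\cite{LincolnMitchellScedrov92} to the modal calculus \MILL. First I would invoke cut elimination (Theorem~\ref{thm:cut-elim}) so that it suffices to search only for cut-free proofs, and note the subformula property: every sequent occurring in a cut-free derivation of $\Gamma \t A$ is built from subformulas of $\Gamma \cup \{A\}$. The key combinatorial observation is that in every rule of Table~\ref{table:mill} other than cut, the multiset of formula occurrences in each premise is, up to replacing one formula by its immediate subformulas, no larger than that of the conclusion; in particular the total size (number of symbols) of any sequent appearing above $\Gamma \t A$ in a cut-free proof is bounded by the size of $\Gamma \t A$. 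Hence each individual sequent to be proved can be written down in polynomial space.

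Next I would describe the alternating/recursive proof-search procedure. To decide whether a sequent $S$ of size $\le n$ is provable, the algorithm nondeterministically guesses the last rule applied and the way $S$ matches that rule's conclusion (for the multiplicative rules $\otimes$R and $\multimap$L this includes guessing the partition of the context $\Gamma$ into $\Gamma,\Gamma'$), then recursively checks each premise. Because each premise again has size $\le n$, the recursion can be implemented depth-first reusing space across sibling subgoals, so only one root-to-leaf branch of the search tree is kept in memory at a time. The one point that needs care for \MILL is the new rule $\Box$(re): its premises $A \t B$ and $B \t A$ have empty contexts and the formulas $A,B$ are proper subformulas of $\Box A, \Box B$, so this rule strictly decreases size and is entirely unproblematic; it adds at most the obvious branching and introduces no new space cost. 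Thus the working memory is bounded by (branch length) $\times$ (sequent size).

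The remaining obstacle — and it is the same obstacle as in the non-modal case — is bounding the \emph{depth} of the search tree, i.e.\ the length of a root-to-leaf branch in a cut-free proof, by a polynomial in $n$; without this the depth-first stack could grow too large. Here I would reuse the standard argument: along any branch, each rule application is ``charged'' to the principal connective occurrence it decomposes, and although the multiplicative rules split contexts rather than shrink a formula, one can use the classical measure for IMALL (e.g.\ assign to a sequent a weight based on the number of connective occurrences, observing that $\with$L duplicates only the two-way branching but not an unbounded amount of material, and that context-splitting rules partition rather than copy formulas) to show every branch has length polynomial in $n$. The modal rule only helps, since it strictly decreases the size. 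Putting the pieces together: the search is an alternating procedure running in polynomial time (polynomial branch length, each step polynomial), hence in APTIME $=$ PSPACE, and since \MILL-provability is closed under the dual reading this gives the claimed PSPACE bound. I expect the verification that the standard IMALL branch-length bound survives the addition of $\Box$(re) to be routine precisely because $\Box$(re) is size-decreasing; the genuinely delicate part is, as always, the branch-length bound for the context-splitting multiplicative rules, which we simply import from~\cite{LincolnMitchellScedrov92}.
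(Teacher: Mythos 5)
Your proposal is correct and follows essentially the same route as the paper: cut elimination plus the subformula property, a size bound on every sequent in a cut-free proof, and a nondeterministic/depth-first search checking one branch at a time in polynomial space, concluding via NPSPACE (or APTIME) $=$ PSPACE. The only remark worth making is that the branch-length bound you defer to \cite{LincolnMitchellScedrov92} as the ``delicate part'' is in fact immediate here: as you yourself note, every rule of \MILL other than cut (including $\Box$(re)) has premises of strictly smaller total complexity than its conclusion, so branch depth is linear in the size of the end-sequent, which is exactly the observation the paper uses.
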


\begin{proof} (Sketch) The proof adapts the argument in~\cite{LincolnMitchellScedrov92}. By cut elimination, Theorem~\ref{thm:cut-elim}, for every provable sequent in \MILL there is a cut-free proof with same conclusion. 
For every rule in \MILL other than (cut), the premises have a strictly
lower complexity wrt.~the conclusion. Hence, for every provable
sequent, there is a proof whose branches have a depth at most linear
in the size of the sequent. The size of a branch is at most quadratic
in the size of the conclusion. And it contains only subformulas of the
conclusion sequent because of the subformula property. This means that
one can non-deterministically guess such a proof, and check each
branch one by one using only a polynomial space. Proof search is then
in NPSPACE = PSPACE.
\end{proof}

We present the proof of completeness of \MILL wrt.~the class of modal Kripke resource frames. 

\begin{theorem}[Completeness of the sequent calculus]\label{thm:compl-sc}
If $\models \Gamma^* \implies \A$ then $\Gamma \vdash \A$.
\end{theorem}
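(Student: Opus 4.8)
The plan is to prove the statement directly, by a canonical-model construction: I will build one modal Kripke resource model $\mathcal{M}_c$ in which satisfaction at a world coincides with derivability of the corresponding sequent, so that validity of $\Gamma^* \implies \A$ forces the sequent $\Gamma \vdash \A$ to be derivable. Let $M$ be the set of all finite multisets of $\mathcal{L}_{\MILL}$-formulas, let $e$ be the empty multiset, and let $\circ$ be multiset union, so that $(M, e, \circ)$ is a commutative monoid. Define $\Gamma \geq \Delta$ iff $\Gamma \vdash \Delta^*$ (equivalently, iff $\Delta \vdash \B$ implies $\Gamma \vdash \B$ for every $\B$, using that $\Delta \vdash \B$ iff $\Delta^* \vdash \B$); put $V(p) = \{\Gamma \in M : \Gamma \vdash p\}$; and put
\[
N(\Gamma) \;=\; \bigl\{\, \{\Delta \in M : \Delta \vdash \B\} \;:\; \B \in \mathcal{L}_{\MILL} \text{ and } \Gamma \vdash \Box \B \,\bigr\}.
\]
Note that $N$ is defined purely in terms of $\vdash$, so there is no circularity with the interpretation of $\Box$, and that because $\mathcal{L}_{\MILL}$ has no additive disjunction, plain multisets of formulas suffice as worlds, with no detour through prime theories.

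First I would check that $\mathcal{M}_c$ is indeed a modal Kripke resource model. Reflexivity and transitivity of $\geq$ come from $\Gamma \vdash \Gamma^*$ and from the cut rule; bifunctoriality comes from $\otimes$R; heredity of $V$ comes from cut (if $\Gamma \vdash p$ and $\Delta \vdash \Gamma^*$, then $\Delta \vdash p$); and condition~(\ref{princ:heredityN}) follows the same way, since $X = \{\Delta : \Delta \vdash \B\} \in N(\Gamma)$ and $\Theta \geq \Gamma$ yield $\Theta \vdash \Box \B$ by cutting $\Gamma^* \vdash \Box \B$ against $\Theta \vdash \Gamma^*$, hence $X \in N(\Theta)$.

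Then I would establish the Truth Lemma: for every $\A \in \mathcal{L}_{\MILL}$ and every $\Gamma \in M$, $\Gamma \models_{\mathcal{M}_c} \A$ iff $\Gamma \vdash \A$, proceeding by induction on $\A$. The cases $\A = p$, $\A = \textbf{1}$, $\A = \B \otimes \C$, $\A = \B \with \C$ and $\A = \B \implies \C$ are the standard Urquhart-style arguments for \ILL, using the ax rule, the left and right rules of the connective, and cut (for example, for $\implies$: if $\Gamma \vdash \B \implies \C$ and $\Delta \vdash \B$, then $\multimap$L together with cut gives $\Delta, \Gamma \vdash \C$; conversely, instantiating the semantic clause at the world $\{\B\}$ and applying $\multimap$R gives $\Gamma \vdash \B \implies \C$). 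For the new case $\A = \Box \B$: by the induction hypothesis $||\B||^{\mathcal{M}_c} = \{\Delta : \Delta \vdash \B\}$, so $\Gamma \models_{\mathcal{M}_c} \Box \B$ iff $\{\Delta : \Delta \vdash \B\} \in N(\Gamma)$. If $\Gamma \vdash \Box \B$, choosing $\B$ itself in the definition of $N(\Gamma)$ exhibits this membership. Conversely, if $\{\Delta : \Delta \vdash \B\} \in N(\Gamma)$ there is a formula $\C$ with $\{\Delta : \Delta \vdash \C\} = \{\Delta : \Delta \vdash \B\}$ and $\Gamma \vdash \Box \C$; testing the set equality at the singletons $\{\C\}$ and $\{\B\}$ yields $\C \vdash \B$ and $\B \vdash \C$, so $\Box$(re) gives $\Box \C \vdash \Box \B$ and cut gives $\Gamma \vdash \Box \B$. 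This closes the induction. Finally, assume $\models \Gamma^* \implies \A$. Since $\mathcal{M}_c$ is a modal Kripke resource model, this formula is true at $e$ in $\mathcal{M}_c$; unfolding the clause for $\implies$ at the world $\Gamma$ --- where $\Gamma \models_{\mathcal{M}_c} \Gamma^*$ because $\Gamma \vdash \Gamma^*$, and $\Gamma \circ e = \Gamma$ --- yields $\Gamma \models_{\mathcal{M}_c} \A$, hence $\Gamma \vdash \A$ by the Truth Lemma.

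I expect the only genuinely new difficulty, relative to the well-known completeness proof for \ILL, to be the neighbourhood component: $N$ has to be defined so that it is at once well behaved under condition~(\ref{princ:heredityN}) and strong enough to validate both directions of the $\Box$ case of the Truth Lemma, and the rule $\Box$(re) is exactly what makes this possible --- it guarantees that $N$ cannot separate interderivable formulas and it powers the converse direction of that case. The rest is careful but routine bookkeeping with multisets and the $(\cdot)^*$ translation, inherited from the \ILL argument.
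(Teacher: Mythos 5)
Your proposal is correct and follows essentially the same route as the paper: the same canonical model (finite multisets as worlds, multiset union as $\circ$, $\Gamma \geq \Delta$ iff $\Gamma \vdash \Delta^*$, and $N(\Gamma)$ built from provable sequents $\Gamma \vdash \Box\B$), the same verification that it is a modal Kripke resource model, and the same Truth Lemma with the $\Box$ case powered by $\Box$(re) plus cut. The only differences are presentational --- you inline the well-definedness of $N$ into the converse direction of the $\Box$ case rather than stating it as a separate lemma clause, and you conclude directly instead of contrapositively --- neither of which changes the argument.
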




The proof can be summarised as follow. We build a canonical model
$\mathcal{M}^c$~(Definition~\ref{def:mc}). In particular, the set
$M^c$ of states consists in the set of finite multisets of formulas, and the
neutral element $e^c$ is the empty multiset. We first need to show
that it is indeed a modal Kripke resource model~(Lemma~\ref{lem:mc}). Second
we need to show a correspondence, the ``Truth Lemma'', between
$\vdash$ and truth in $\mathcal{M}^c$. Precisely we show that for a
formula $\A$ and a multiset of formulas $\Gamma \in M^c$, it is the
case that $\Gamma$ satisfies $\A$ iff $\Gamma \vdash \A$ is
provable in the calculus~(Lemma~\ref{lem:truth}). Finally, to show
completeness, assume that it is not the case that $\vdash \Gamma^*
\implies \A$. By the Truth Lemma, it means that in the canonical
model $\Gamma^* \implies \A$ is not satisfied at $e^c$. So
$\mathcal{M}^c$ does not satisfy $\Gamma^* \implies \A$. So it is
not the case that $\models \Gamma^* \implies \A$.

\medskip

We construct the canonical model $\mathcal{M}^c$, then we prove that $\mathcal{M}^c$ is a modal Kripke resource model, and we prove the Truth
Lemma. 

In the following, $\sqcup$ is the multiset union. Also, $\mid \A
\mid^c = \{ \Gamma \mid \Gamma \vdash \A \}$.
\begin{definition}\label{def:mc}
Let $\mathcal{M}^c = (M^c,  e^c, \circ^c, \geq^c, N^c, V^c)$ such that:
\begin{itemize}
\item $M^c = \{ \Gamma \mid \Gamma \text{ is a finite multiset of formulas} \}$;
\item $\Gamma \circ^c \Delta = \Gamma \sqcup \Delta$;
\item $e^c = \emptyset$;
\item $\Gamma \geq^c \Delta$ iff $\Gamma \vdash \Delta^*$;
\item $\Gamma \in V^c(p)$ iff $\Gamma \vdash p$;
\item $N^c(\Gamma) = \{ \mid \A \mid^c \mid \Gamma \vdash \Box\A\}$.
\end{itemize}
\end{definition}

\begin{lemma}\label{lem:mc}
$\mathcal{M}^c$ is a modal Kripke resource model.
\end{lemma}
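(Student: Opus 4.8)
The plan is to verify each of the defining components of a modal Kripke resource model in Definition~\ref{def:mkrm} in turn: that $(M^c, e^c, \circ^c, \geq^c)$ is a Kripke resource frame — i.e.\ $(M^c,e^c,\circ^c)$ is a commutative monoid, $\geq^c$ is a pre-order, and bifunctoriality holds; that $V^c$ satisfies the heredity condition; and that $N^c$ satisfies condition~(\ref{princ:heredityN}).

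First I would dispatch the monoid structure. Since $\circ^c$ is multiset union and $e^c$ is the empty multiset, associativity, commutativity, and neutrality of $e^c$ are immediate properties of $\sqcup$. Next, $\geq^c$ is defined by $\Gamma \geq^c \Delta$ iff $\Gamma \vdash \Delta^*$. Reflexivity amounts to $\Gamma \vdash \Gamma^*$, which follows by repeated $\otimes$R from the axioms $A \vdash A$ (plus $\textbf{1}$R for the empty case), modulo the evident provable equivalence between $\Gamma$ and $\Gamma^*$ on the left. Transitivity: from $\Gamma \vdash \Delta^*$ and $\Delta \vdash \Theta^*$ we must get $\Gamma \vdash \Theta^*$; since $\Delta \vdash \Theta^*$ and $\Delta^* \vdash \Delta$ is provable (by iterated $\otimes$L), a cut on $\Delta^*$ gives $\Gamma \vdash \Theta^*$. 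Here I would lean on cut being available in the calculus (we need not invoke cut elimination for building the canonical model). For bifunctoriality, suppose $\Gamma \geq^c \Gamma'$ and $\Delta \geq^c \Delta'$, i.e.\ $\Gamma \vdash \Gamma'^*$ and $\Delta \vdash \Delta'^*$; I want $\Gamma \sqcup \Delta \vdash (\Gamma' \sqcup \Delta')^*$. Apply $\otimes$R to the two derivations to get $\Gamma, \Delta \vdash \Gamma'^* \otimes \Delta'^*$, then cut against the provable $\Gamma'^* \otimes \Delta'^* \vdash (\Gamma' \sqcup \Delta')^*$ (associativity/commutativity of $\otimes$ up to provable equivalence). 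Heredity for $V^c$: if $\Gamma \in V^c(p)$, i.e.\ $\Gamma \vdash p$, and $\Theta \geq^c \Gamma$, i.e.\ $\Theta \vdash \Gamma^*$, then cutting $\Gamma^* \vdash p$ (obtained from $\Gamma \vdash p$ by $\otimes$L steps) against $\Theta \vdash \Gamma^*$ yields $\Theta \vdash p$.

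The one genuinely modal point, and the step I expect to be the main obstacle, is condition~(\ref{princ:heredityN}) for $N^c$: if $X \in N^c(\Gamma)$ and $\Delta \geq^c \Gamma$ then $X \in N^c(\Delta)$. By definition $X \in N^c(\Gamma)$ means $X = \mid\! A \!\mid^c$ for some $A$ with $\Gamma \vdash \Box A$, and $\Delta \geq^c \Gamma$ means $\Delta \vdash \Gamma^*$. I need to produce some formula $B$ with $\mid\! B \!\mid^c = \mid\! A \!\mid^c$ and $\Delta \vdash \Box B$ — and the natural choice is $B = A$ itself, so it suffices to show $\Delta \vdash \Box A$. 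From $\Delta \vdash \Gamma^*$ and $\Gamma^* \vdash \Gamma^*$ we can hardly apply $\Box$(re) directly, so instead I would argue: $\Gamma \vdash \Box A$ gives, via $\otimes$L-steps, $\Gamma^* \vdash \Box A$; then cut with $\Delta \vdash \Gamma^*$ to get $\Delta \vdash \Box A$. (Equivalently one can phrase this through $\Box$(re) applied to the provable equivalence $\Gamma^* \dashv\vdash$ a single formula, but the direct cut is cleanest.) This shows $\mid\! A \!\mid^c \in N^c(\Delta)$, as required. A small subtlety worth a sentence in the write-up: $N^c(\Gamma)$ is defined as a set of \emph{extensions} $\mid\! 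A \!\mid^c$, so membership $X \in N^c(\Gamma)$ only remembers the set $X$, not the particular $A$; but since we only ever need \emph{one} witness formula, picking the same $A$ is fine.

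Throughout, the recurring lemma I would isolate first (perhaps as an auxiliary fact before the case analysis) is that for any finite multiset $\Gamma$ one has both $\Gamma \vdash \Gamma^*$ and $\Gamma^* \vdash \Gamma$ derivable, and more generally that a sequent $\Gamma \vdash C$ is derivable iff $\Gamma^* \vdash C$ is — all by routine $\otimes$L/$\otimes$R/$\textbf{1}$L/$\textbf{1}$R manipulations plus cut. With that in hand, every clause above reduces to a one- or two-line cut argument, and no appeal to cut elimination, soundness, or the Truth Lemma is needed — the proof is entirely syntactic and self-contained.
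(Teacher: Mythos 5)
Your proof is correct and follows essentially the same route as the paper's proof of Lemma~\ref{lem:mc}: each clause of Definition~\ref{def:mkrm} (monoid structure of $\sqcup$, pre-order via reflexivity from ax/$\otimes$R and transitivity via cut, bifunctoriality via $\otimes$R plus cut, heredity of $V^c$, and condition~(\ref{princ:heredityN}) for $N^c$) is verified by the same short syntactic cut arguments the paper uses. The one divergence is that the paper also proves, inside this lemma, that $N^c$ is ``well-defined'' in the stronger sense of representative-independence: if $\mid A \mid^c = \mid B \mid^c$ then $\Gamma \vdash \Box A$ iff $\Gamma \vdash \Box B$ (one extracts $A \vdash B$ and $B \vdash A$ from the equality of extensions, applies $\Box$(re) to get $\Box A \vdash \Box B$ and $\Box B \vdash \Box A$, and cuts). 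You explicitly note this subtlety and correctly observe that a single witness formula suffices for the lemma as stated, so skipping it is not a gap here; but be aware that this stronger fact is precisely what the Truth Lemma (Lemma~\ref{lem:truth}) uses silently in the step from $\mid B \mid^c \in N^c(\Gamma)$ to $\Gamma \vdash \Box B$, which is why the paper front-loads it into this proof --- if you keep your leaner version of Lemma~\ref{lem:mc}, that $\Box$(re)-based argument has to appear in the Truth Lemma instead.
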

\begin{proof}
\noindent
1. $(M^c, e^c, \circ^c, \geq^c)$ is the ``right type'' of ordered monoid:
(i)~$(M^c,e^c,\circ^c)$ is a commutative monoid with neutral element
$e^c$, and (ii)~$\geq^c$ is a pre-order on $M^c$.  Finally, (iii)~if
$\Gamma \geq^c \Delta$ and $\Gamma' \geq^c \Delta'$ then $\Gamma
\circ^c \Gamma' \geq^c \Delta \circ^c \Delta'$.

For~(i), commutativity (and associativity) follow from the definition
of $\circ^c$ as the multiset union, and the neutrality of $e^c$
follows from it being the empty multiset---the neutral element of the
multiset union.

For~(ii), $\geq^c$ is reflexive because $\{A_1, \ldots, A_n\} \vdash
\{A_1, \ldots, A_n\}^*$ can be proved from the axioms (ax) $A_k \vdash
A_k$, $1 \leq k \leq n$, and by applying $\otimes$R $n-1$ times. The
key rule to establish that $\geq^c$ is transitive is $cut$. 

For~(iii), assume $\Gamma \geq^c \Delta$ and $\Gamma' \geq^c \Delta'$,
that is, $\Gamma \vdash \Delta^*$ and $\Gamma' \vdash
\Delta'^*$. By $\otimes$R we have $\Gamma, \Gamma' \vdash
\Delta^* \otimes \Delta'^*$. By applying the definitions we end
up with $\Gamma \sqcup \Gamma' \vdash (\Delta \sqcup \Delta')^*$
and the expected result follows.\\

\noindent
2. $V^c$ is a valuation function and satisfies heredity: if $\Gamma
\in V(p)$ and $\Delta \geq^c \Gamma$ then $\Delta \in V(p)$. To see
this, suppose $\Gamma \vdash p$ and $\Delta \vdash \Gamma^*$. By
applying $\otimes$L enough times, we have $\Gamma^* \vdash p$. By
$cut$, we obtain $\Delta \vdash p$.\\

\noindent
3. $N^c$ is well-defined: Suppose that $\mid \A \mid^c =
\mid \B \mid^c$. We need to show that $\mid \A \mid^c \in
N^c(\Gamma)$ iff $\mid \B \mid^c \in N^c(\Gamma)$.\\

From $\mid \A \mid^c = \mid \B \mid^c$, we have $\Gamma \vdash \A
\Rightarrow \Gamma \vdash \B$. In particular, we have $\A \vdash \A
\Rightarrow \A \vdash \B$. Hence, $\A \vdash \B$ is provable (by
rule~(ax)). We show symmetrically that $\B \vdash \A$ is provable.\\

From $\A \vdash \B$ and $\B \vdash \A$, we have by rule
$\Box$(re) that $\Box \A \vdash \Box\B$ is provable, and also that
$\Box \B \vdash \Box\A$ is provable.

Now suppose that $\Gamma \vdash \Box\A$. Since $\Box \A \vdash
\Box\B$ is provable, we obtain by $cut$ that $\Gamma \vdash
\Box\B$ is provable. Symmetrically, suppose that $\Gamma \vdash
\Box\B$. Since $\Box \B \vdash \Box\A$ is provable, we obtain by
$cut$ that $\Gamma \vdash \Box\A$ is provable. 

Hence, we have that $\Gamma \vdash \Box \A$ iff $\Gamma \vdash \Box
\B$. By definition of $N^c$, it means that $\mid \A \mid^c \in
N^c(\Gamma)$ iff $\mid \B \mid^c \in N^c(\Gamma)$.\\ 

\noindent
4. If $X \in N^c(\Gamma)$ and $\Delta \geq^c \Gamma$ then $X \in
N^c(\Delta)$. To see that this is the case, the hypotheses are
equivalent to $\Gamma \vdash \Box A$ for some $A$ such that $\mid A
\mid^c = X$, and $\Delta \vdash \Gamma^*$. By repeatedly applying
$\otimes$L to obtain $\Gamma^* \vdash \Box A$ and by using $cut$,
we infer that $\Delta \vdash \Box A$. Which is equivalent to the
statement that $X \in N^c(\Delta)$.
\end{proof}

Let us then denote by $\models_c$ the truth relation in $\mathcal{M}^c$.

\begin{lemma}\label{lem:truth}
$\Gamma \models_c \A$ iff $\Gamma \vdash \A$.
\end{lemma}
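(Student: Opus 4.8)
The plan is to prove the Truth Lemma by induction on the structure of the formula $\A$, showing simultaneously for each $\A$ that $\Gamma \models_c \A$ iff $\Gamma \vdash \A$ for every multiset $\Gamma \in M^c$. The atomic case $\A = p$ is immediate from the definition $\Gamma \in V^c(p)$ iff $\Gamma \vdash p$, together with the clause $\Gamma \models_c p$ iff $\Gamma \in V^c(p)$. The case $\A = \textbf{1}$ unpacks to: $\Gamma \models_c \textbf{1}$ iff $\Gamma \geq^c e^c$ iff $\Gamma \vdash (\emptyset)^* = \textbf{1}$, which is what we want. The inductive cases for $\otimes$, $\with$ and $\implies$ are the standard ones for Kripke resource models, and I would reproduce them using the derived admissible rules of the calculus: for $\otimes$ one uses $\otimes$R and $\otimes$L (plus $cut$ and the definition of $\geq^c$) to match the existential truth-condition; for $\with$ one uses $\with$R, $\with$L; for $\implies$ one uses $\multimap$R in one direction and $\multimap$L together with the fact that $A \vdash A$ is an axiom in the other. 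In each of these cases the induction hypothesis is applied to the immediate subformulas, occasionally after first applying heredity (Proposition~\ref{prop:modal-heredity}) which holds in $\mathcal{M}^c$ by Lemma~\ref{lem:mc}.

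The genuinely new case, and the one I expect to be the crux, is $\A = \Box \B$. By the induction hypothesis for $\B$ we have $\mid \B \mid^c = \{\Delta \mid \Delta \vdash \B\} = \{\Delta \mid \Delta \models_c \B\} = \|\B\|^{\mathcal{M}^c}$, i.e.\ the syntactic set $\mid \B \mid^c$ coincides with the semantic extension of $\B$ in the canonical model. Now $\Gamma \models_c \Box \B$ holds, by the truth clause for $\Box$, exactly when $\|\B\|^{\mathcal{M}^c} \in N^c(\Gamma)$, i.e.\ when $\mid \B \mid^c \in N^c(\Gamma)$. So it remains to show $\mid \B \mid^c \in N^c(\Gamma)$ iff $\Gamma \vdash \Box \B$. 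The right-to-left direction is immediate from the very definition of $N^c(\Gamma) = \{\mid \A \mid^c \mid \Gamma \vdash \Box \A\}$. For left-to-right, suppose $\mid \B \mid^c \in N^c(\Gamma)$; by definition of $N^c$ there is some $\A$ with $\mid \A \mid^c = \mid \B \mid^c$ and $\Gamma \vdash \Box \A$. From $\mid \A \mid^c = \mid \B \mid^c$ we extract, exactly as in item~3 of the proof of Lemma~\ref{lem:mc}, that $\A \vdash \B$ and $\B \vdash \A$ are both provable (instantiate the set equality at $\A$ and at $\B$ and use rule~(ax)); hence by $\Box$(re) we get $\Box \A \vdash \Box \B$, and then $cut$ with $\Gamma \vdash \Box \A$ yields $\Gamma \vdash \Box \B$, as required.

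I would package the argument so that the well-definedness work already done in Lemma~\ref{lem:mc}(3) is reused rather than repeated: the point is simply that $N^c$ was constructed precisely so that membership of $\mid \B \mid^c$ in $N^c(\Gamma)$ is provability-of-$\Box\B$ up to the provable equivalence of cut formulas. The main obstacle is thus not any deep combinatorial fact but getting this interface right — making sure the induction hypothesis is applied to $\B$ to identify $\|\B\|^{\mathcal{M}^c}$ with $\mid \B \mid^c$ before invoking the definition of $N^c$, and then feeding that through $\Box$(re) and $cut$. Once that step is in place, the completeness theorem (Theorem~\ref{thm:compl-sc}) follows exactly along the lines sketched before the statement: if $\not\vdash \Gamma^* \implies \A$ then by the propositional $\implies$/$\otimes$ reasoning $e^c = \emptyset \not\models_c \Gamma^* \implies \A$ by the Truth Lemma, so $\mathcal{M}^c$ witnesses $\not\models \Gamma^* \implies \A$.
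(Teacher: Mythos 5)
Your proposal is correct and follows essentially the same route as the paper: structural induction on $\A$, with the modal case handled by the chain $\Gamma \models_c \Box\B$ iff $\|\B\|^{\mathcal{M}^c} \in N^c(\Gamma)$ iff (by the induction hypothesis) $\mid\B\mid^c \in N^c(\Gamma)$ iff $\Gamma \vdash \Box\B$. The only difference is that you explicitly unpack the last equivalence via $\Box$(re) and $cut$ (reusing the well-definedness argument of Lemma~\ref{lem:mc}, item~3), a step the paper compresses into ``by definition of $N^c$''.
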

\begin{proof}
By structural induction on the form of $\A$. For the base case, we have for every atom $p$ that $\Gamma \models_c p$ iff $\Gamma \in V^c(p)$ iff $\Gamma \vdash p$.
For induction, we suppose that the lemma holds for a formula $\B$ (Induction Hypothesis). The cases of the propositional connectives are found in~\cite{kamide03bulletin}. We prove here the case $\A = \Box \B$.






We have the following sequence of equivalences of $\Gamma
\models_c \Box \B$:
\begin{itemize}
\item[] iff $\mid\mid\B\mid\mid^{\mathcal{M}^c} \in N^c(\Gamma)$, by
  definition of $\models_c$;
\item[] iff $\{ \Delta \mid \Delta \models_c \B \} \in N^c(\Gamma)$,
  by definition of $\mid\mid . \mid\mid^{\mathcal{M}^c}$;
\item[] iff $\{ \Delta \mid \Delta \vdash \B \} \in N^c(\Gamma)$, by
  Induction Hypothesis;
\item[] iff $\mid\B\mid^c \in N^c(\Gamma)$, by definition of $\mid . \mid^c$;
\item[] iff $\Gamma \vdash \Box \B$, by definition of $N^c$.
\end{itemize}
\end{proof}

\medskip

We could now prove that the sequent calculus is sound, and we could
adapt our proof of Theorem~\ref{thm:compl-sc} to prove that the Hilbert
system \HMILL is complete. But we are already there; We have the
following:
\begin{itemize}
\item if $\Gamma \vdash_{\HMILL} A$ then $e \models \Gamma^* \implies
  A$ (Theorem~\ref{thm:soundness-hmill});
\item if $e \models \Gamma^* \implies A$ then $\Gamma \vdash A$
  (Theorem~\ref{thm:compl-sc});
\item if $\Gamma \vdash A$ then $\Gamma \vdash_{\HMILL} A$ (Theorem
  \ref{sequenthilbert}).
\end{itemize}
Therefore, the completeness of the Hilbert system and the soundness of the
sequent calculus both follow.
\begin{corollary}
We have:
\begin{itemize}
\item if  ${e \models \Gamma^* \implies
  A}$ { then } ${\Gamma \vdash_{\HMILL} A}$;
\item if  ${\Gamma \vdash A}$  { then } ${e \models \Gamma^* \implies A}$.
\end{itemize}
\end{corollary}

\section{Adding non-commutativity}
\label{sec:mill-nc}

Systems that integrate a commutative Linear Logic with a non-commutative one have been studied in \cite{DeGroote1996,AbrusciRuet2000,Retore1997}. Note that the purely non-commutative version of intuitionistic Linear Logic is basically the calculus with two order sensitive implications developed by Lambek~\cite{lambek1958}.
The basic propositional logic that we use is provided by \cite{DeGroote1996} and labelled PCL, partially commutative Linear Logic. The main novelty is that the structural rule of exchange no longer holds in general. The context of a sequent is now only partially commutative, and is now built by means of two constructors. Thus, we essentially use \emph{context} as a shorthand for \emph{partially commutative context}. Every formula (to be defined shortly after) is a context. Then, for every context $\Gamma$ and $\Delta$, we can build their \emph{parallel} composition $(\Gamma , \Delta)$ (primarily commutative context) and their \emph{serial} composition $(\Gamma ; \Delta)$ (primarily non-commutative context) \cite{DeGrooteRetoreSPO1997}. A context can thus be seen as finite tree with non-leaf nodes labelled with `$;$' or `$,$' and with leafs labelled by  formulas. Two branches emanating from a `$,$' commute with each other, while the branches emanating from a `$;$' node do not.
We write $()$ for the empty context, and assume that it acts as the identity element of the parallel and serial composition, that is: $((),\Gamma) = (\Gamma, ()) = (();\Gamma) = (\Gamma; ()) =
\Gamma$.
We denote by $\Gamma[-]$ a context ``with a hole'', and $\Gamma[\Delta]$ denotes this very context with the ``hole filled'' with the context $\Delta$.

The language of \PCMILL extends the language of \MILL by adding the following operators: the non-commutative tensor noted $\odot$ and the two order sensitive implications noted $\setminus$ and $/$:
\[A ::= \textbf{1} \mid p \mid A \otimes A \mid A \with A \mid A \implies A \mid A \odot A \mid A \setminus A \mid A / A \mid \Box
A\] where $p \in Atom$.

In order to blend together commutative and non-commutative sequence, we have to choose what is the interpretation of commutativity, namely if we view a commutative concurrent process such as $A \otimes B$ as entailing that either directions are allowed \cite{DeGroote1996,DeGrooteRetoreSPO1997}. That is, parallel composition is weaker than serial composition, so it shall hold that $A \otimes B \implies A \odot B$. This means that if two resources can be combined with no particular order, then they can be combined sequentially. This choice is reflected by the structural rule of \emph{entropy} (ent) below. 
The first two lines of Table~\ref{table:PCMILL} state the associativity of serial and parallel compositions, the third line states the commutativity of parallel composition and the entropy principle. 
\dnote{The meaning of the two implications can be expressed in terms of \emph{pre-conditions} and \emph{post-conditions}. $A \setminus B$ requires that $A$ occurs \emph{before} (i.e. on the left of) the implication: $A; A \setminus B \t B$. By contrast, $B / A$ requires that $A$ occurs \emph{after} the implication, accordingly: $B / A; A \t B$.}

\begin{table}[h]
\begin{center}
\emph{Structural rules}

\begin{tabular}[]{cc}
&  \\
\AC{\Gamma[\Delta_1, (\Delta_2, \Delta_3)] \t A}
\rl{,a1}
\UC{\Gamma[(\Delta_1, \Delta_2), \Delta_3)] \t A}
\dip

&
\AC{\Gamma[(\Delta_1, \Delta_2), \Delta_3)] \t A}
\rl{,a2}
\UC{\Gamma[\Delta_1, (\Delta_2, \Delta_3)] \t A}
\dip
\\
& \\

\AC{\Gamma[\Delta_1; (\Delta_2; \Delta_3)] \t A}
\rl{;a1}
\UC{\Gamma[(\Delta_1; \Delta_2); \Delta_3)] \t A}
\dip

&
\AC{\Gamma[(\Delta_1; \Delta_2); \Delta_3)] \t A}
\rl{;a2}
\UC{\Gamma[\Delta_1; (\Delta_2; \Delta_3)] \t A}
\dip
\\
& \\

\AC{\Gamma[\Delta_1, \Delta_2] \t A}
\rl{,com}
\UC{\Gamma[\Delta_2, \Delta_1] \t A}
\dip

& 

\AC{\Gamma[\Delta_1; \Delta_2] \t A}
\rl{ent}
\UC{\Gamma[\Delta_1, \Delta_2] \t A}
\dip\\
&\\
\end{tabular}

\emph{Non-commutative connectives}

\begin{tabular}[]{cc}
& \\
\AxiomC{$\Gamma[A; B] \vdash C$} \RightLabel{$\odot$L} \UnaryInfC{$\Gamma[A \odot B] \vdash
C$} \DisplayProof  &  \AxiomC{$\Gamma \vdash A $} \AxiomC{$\Gamma' \vdash B$}
 \RightLabel{$\odot$R} \BinaryInfC{$\Gamma; \Gamma' \vdash A\odot B$} \DisplayProof      \\
& \\
\AxiomC{$\Gamma \vdash A$} \AxiomC{$\Delta[B] \vdash C$} \RightLabel{$\setminus$ L}
\BinaryInfC{$\Delta[\Gamma; A\setminus B] \vdash C$} \DisplayProof  &
\AxiomC{$A; \Gamma \vdash B$} \RightLabel{$\setminus$R}
\UnaryInfC{$\Gamma \vdash A \setminus B$}   \DisplayProof \\
& \\
\AxiomC{$\Gamma \vdash A$} \AxiomC{$\Delta[B] \vdash C$} \RightLabel{$/$ L}
\BinaryInfC{$\Delta[ B / A; \Gamma]  \vdash C$} \DisplayProof  &
\AxiomC{$\Gamma; A \vdash B$} \RightLabel{$/$R}
\UnaryInfC{$\Gamma \vdash A / B$}   \DisplayProof \\
\end{tabular}
\end{center}
\caption{\PCMILL: extending the sequent calculus \MILL}
\label{table:PCMILL}
\end{table}

\medskip\noindent\textbf{Semantics and completeness.}~
In order to define a class of modal Kripke resource models for \PCMILL, we
extend the models we have considered in
Section~\ref{sec:modal-krm}. We add to a modal Kripke resource model
${(M, e, \circ, \geq, N, V)}$ an associative, non-commutative
operation $\bullet$ such that $e$ is neutral also for $\bullet$. Thus,
a Kripke resource model is now specified by $\mathcal{M} = {(M, e,
  \circ, \bullet, \geq, N, V)}$.  Bifunctoriality is assumed also for
$\bullet$: if $m \geq n$, and $m' \geq n'$, then $m \bullet m' \geq n
\bullet n'$.  Moreover, the entropy principle is captured in the
models by means of the following constraint: for all $x$, $y$, $x
\circ y \geq x \bullet y$. If $\mathcal{M}$ satisfies all these
conditions, we call it a \emph{partially commutative modal Kripke resource
  model}.

The new truth conditions are the following:
\begin{description}
\item $m \models_\mathcal{M} A \odot B$ iff there exist $m_{1}$ and $m_{2}$ such that $m \geq m_{1} \bullet m_{2}$ and $m_{1} \models_\mathcal{M} A$ and  
$m_{2} \models_\mathcal{M} B$.
\item $m \models_\mathcal{M} A \setminus B$ iff for all $n \in M$, if $n \models_\mathcal{M} A$, then $n \bullet m \models_\mathcal{M} B$.  
\item $m \models_\mathcal{M} B / A$ iff for all $n \in M$, if $n \models_\mathcal{M} A$, then $m \bullet n \models_\mathcal{M} B$.
\end{description}
Note that, if $m \models A \otimes B$, then by $m_{1} \circ m_{2} \geq m_{1} \bullet m_{2}$ and heredity, we have that $m \models A \odot B$.

\medskip
We shall prove soundness and completeness of \PCMILL. We start by
discussing the partially commutative version of \MILL. Soundness of
\PCMILL wrt.\ the semantics above is just an extension of the induction
for the soundness of \MILL with the new rules for non-commutative
connective. For completeness, we need to extend the construction of
the canonical model in order to account for the non-commutative
structure.

As before, a context can be associated to a unique formula by means of
a recursive operation, here $.^+$. We adopt the following
definition:
\begin{align*}
()^+  &=  \textbf{1}\\
(A)^+  &=  A\\
(\Gamma,\Delta)^+  &=  (\Gamma^+ \otimes \Delta^+)\\
  (\Gamma;\Delta)^+  &=  (\Gamma^+ \odot\Delta^+)\\
\end{align*}
Let $\mathcal{M}^c_\bullet = (M^c,  e^c, \circ^c, \bullet^c, \geq^c, N^c, V^c)$ such that:
$M^c = \{ \Gamma \mid \Gamma \text{ is a partially}$\linebreak $\text{commutative context} \}$;
$e^c = ()$; 
$\Gamma \circ^c \Delta = (\Gamma, \Delta)$; 
$\Gamma \bullet^c \Delta = (\Gamma; \Delta)$;
$\Gamma \geq^c \Delta$ iff $\Gamma \vdash \Delta^+$;
$\Gamma \in V^c(p)$ iff $\Gamma \vdash p$;
$N^c(\Gamma) = \{ \mid \A \mid^c \mid \Gamma \vdash \Box\A\}$.


We will show that $\mathcal{M}^c_\bullet$ is actually a partially
commutative modal Kripke resource model. It suffices to adapt and extend
the proof of Lemma~\ref{lem:mc}. 
Important bits are:
\begin{enumerate}
\item $e^c$ is neutral for $\bullet^c$;
\item associativity of $\bullet^c$;
\item for all $\Gamma, \Delta \in M^c$: $\Gamma \circ^c \Delta \geq^c
  \Gamma \bullet^c \Delta$;
\item 
$\Gamma \geq^c \Delta$ and $\Gamma' \geq^c \Delta'$ then $\Gamma
\bullet^c \Gamma' \geq^c \Delta \bullet^c \Delta'$;
\item 
If $X \in N^c(\Gamma)$ and $\Gamma \geq^c \Delta$ then $X \in
N^c(\Delta)$.
\end{enumerate}
We sketch the arguments here. Item~1 and item~2 follow from the definition of partially commutative contexts. We look at the case of entropy (item~3) with more attention. By
 repeated use of (ax), $\otimes$R, and $\odot$R, we can show
$(\Gamma ; \Delta) \vdash (\Gamma ; \Delta)^+$. By (ent), we obtain
$(\Gamma , \Delta) \vdash (\Gamma ; \Delta)^+$, and apply the
definition of $\geq^c$ to have $(\Gamma , \Delta) \geq^c (\Gamma
; \Delta)^+$. By definition of $\circ^c$ we have $(\Gamma \circ^c
\Delta) \geq^c (\Gamma ; \Delta)^+$. Call $In$ the latter
inequality. Now, by (ax), we have $(\Gamma;\Delta)^+ \vdash (\Gamma;\Delta)^+$, which by definition of $\geq^c$ is equivalent to
$(\Gamma;\Delta)^+ \geq^c (\Gamma;\Delta)$. By definition of
$\bullet^c$, we have $(\Gamma ; \Delta)^+ \geq^c (\Gamma \bullet^c
\Delta)$. Together with $In$, we have 
$(\Gamma \circ^c
\Delta) \geq^c (\Gamma \bullet^c
\Delta)$. 

To start with item~4, assume $\Gamma \geq^c \Delta$ and $\Gamma'
\geq^c \Delta'$, that is by definition of $\geq^c$, $\Gamma \vdash
\Delta^+$ and $\Gamma' \vdash \Delta'^+$. By $\odot$R, we have
$(\Gamma ; \Gamma') \vdash \Delta^+ \odot \Delta'^+$. By the
definition of~$.^+$, it means that $(\Gamma ; \Gamma') \vdash (\Delta
; \Delta')^+$. Again by definition of $\geq^c$, we have $(\Gamma ;
\Gamma') \geq^c (\Delta ; \Delta')$. Finally by definition of
$\bullet^c$ we conclude that $(\Gamma \bullet^c \Gamma') \geq^c
(\Delta \bullet^c \Delta')$.

Item~5 is almost identical to the same case in the proof of Lemma~\ref{lem:mc}, but we explicitly adapt it here.
The hypotheses are
equivalent to $\Gamma \vdash \Box A$ for some $A$ such that $\mid A
\mid^c = X$, and $\Delta \vdash \Gamma^+$. By repeatedly applying
$\otimes$L and $\odot$L to obtain $\Gamma^+ \vdash \Box A$ and by using $cut$,
we infer that $\Delta \vdash \Box A$. Which is equivalent to the
statement that $X \in N^c(\Delta)$.


The truth lemma can be checked by routine induction. Thus we can conclude.
\begin{theorem} \PCMILL is sound and complete wrt.~the class of  partially commutative modal Kripke resource models. 
\end{theorem}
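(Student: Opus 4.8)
The plan is to treat soundness and completeness separately, each by a routine extension of the corresponding \MILL argument. Throughout, write $\Gamma^+$ for the formula built from a partially commutative context $\Gamma$ by the operation $.^+$; soundness is the statement that derivability of $\Gamma \t A$ in \PCMILL implies $\models \Gamma^+ \implies A$ in every partially commutative modal Kripke resource model, and completeness is the converse. A preliminary step, needed on both sides, is to record that Proposition~\ref{prop:modal-heredity} still holds for $\mathcal{L}_{\PCMILL}$: heredity propagates through $\odot$, $\setminus$ and $/$ exactly as it does through $\otimes$ and $\implies$ — the $\odot$ clause is the $\otimes$ clause with $\circ$ replaced by $\bullet$, and the clauses for $\setminus$, $/$ use bifunctoriality of $\bullet$ — and that $\models A \otimes B \implies A \odot B$, which is immediate from the constraint $x \circ y \geq x \bullet y$ and heredity.

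For \textbf{soundness} I would simply continue the induction on derivation height of Theorem~\ref{thm:soundness-hmill} with the new rules. The associativity rules $(;a1)$, $(;a2)$ are discharged by associativity of $\bullet$ and $(,a1)$, $(,a2)$, $(,\mathrm{com})$ by $(M,e,\circ)$ being a commutative monoid. For entropy one wants a small \emph{context-monotonicity lemma} — if $\models C^+ \implies D^+$ then $\models (\Gamma[C])^+ \implies (\Gamma[D])^+$, proved by induction on the context $\Gamma[-]$ using monotonicity of $\otimes$ and $\odot$ — and applies it with $C = (\Delta_1,\Delta_2)$ and $D = (\Delta_1;\Delta_2)$, together with the validity of $A \otimes B \implies A \odot B$ noted above, to turn a proof of $\models (\Gamma[\Delta_1;\Delta_2])^+ \implies A$ into one of $\models (\Gamma[\Delta_1,\Delta_2])^+ \implies A$. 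The rules $\odot$L/R mimic $\otimes$L/R with $\bullet$ for $\circ$ ($\odot$R using bifunctoriality of $\bullet$), and $\setminus$L/R and $/$L/R mimic $\implies$L/R, the only care being to put the argument on the correct side of $\bullet$ as dictated by the semantic clauses; their context-carrying versions go again through the context-monotonicity lemma.

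For \textbf{completeness} I would take the canonical model $\mathcal{M}^c_\bullet$ already displayed and, extending Lemma~\ref{lem:mc}, check it is a partially commutative modal Kripke resource model: items~1 and~2 (neutrality and associativity of $\bullet^c$) are immediate from the definition of partially commutative contexts; item~3, $\Gamma \circ^c \Delta \geq^c \Gamma \bullet^c \Delta$, is the only genuinely new check and follows the short derivation using $(\mathrm{ent})$ indicated above ($(\Gamma;\Delta) \vdash (\Gamma;\Delta)^+$ by (ax), $\otimes$R, $\odot$R; then $(\mathrm{ent})$, (ax), and the definitions of $\geq^c$, $\circ^c$, $\bullet^c$); item~4 is $\odot$R plus the definitions; item~5 (heredity of $N^c$) repeats the \MILL argument with $\otimes$L and $\odot$L collapsing $\Gamma$ to $\Gamma^+$, and well-definedness of $N^c$ is unchanged since its proof uses only (ax), cut and $\Box$(re). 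Next I would prove the Truth Lemma $\Gamma \models_c A$ iff $\Gamma \t A$ by structural induction on $A$, the atomic, propositional and $\Box$ cases being exactly as in Lemma~\ref{lem:truth}, and the new cases being: for $A = B \odot C$, forward by $\odot$R followed by cut against $\Gamma \vdash (\Gamma_1;\Gamma_2)^+$, backward by taking the witnessing contexts $\Gamma_1 := B$, $\Gamma_2 := C$; for $A = B \setminus C$ (and symmetrically $A = C / B$), forward by applying $\setminus$L with the (ax) sequent on $C$ and then cutting the implication away, backward by instantiating the quantified context with the single formula $B$ and applying $\setminus$R. Finally, exactly as in Theorem~\ref{thm:compl-sc}: if $\Gamma \t A$ is not derivable then $\Gamma \not\models_c A$ by the Truth Lemma while $\Gamma \models_c \Gamma^+$ (because $\Gamma \vdash \Gamma^+$ is derivable), so $\Gamma^+ \implies A$ fails at the empty context $e^c$ in $\mathcal{M}^c_\bullet$, whence $\not\models \Gamma^+ \implies A$.

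The \textbf{main obstacle} is not conceptual but bookkeeping: making the non-commutative geometry line up. In the soundness of $\setminus$L, $/$L and in the forward directions of the $\setminus$, $/$ cases of the Truth Lemma one must be careful that the cut used is the contextual cut — cutting into the hole of a `;'-context — and that the implication rule invoked is the correctly oriented one ($A \setminus B$ consumes its argument to the left, $B / A$ to the right). The context-monotonicity lemma and item~3 of the model check are the two other places where the new semantic constraint $x \circ y \geq x \bullet y$ is actually used, so those are worth writing out carefully rather than declaring routine.
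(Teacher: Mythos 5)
Your proposal is correct and follows essentially the same route as the paper: soundness by extending the \MILL soundness induction to the new structural and non-commutative rules, and completeness via the canonical model $\mathcal{M}^c_\bullet$ with the same key checks (neutrality and associativity of $\bullet^c$, the entropy inequality $\Gamma \circ^c \Delta \geq^c \Gamma \bullet^c \Delta$ via (ent), bifunctoriality of $\bullet^c$, heredity of $N^c$) followed by a Truth Lemma and the standard evaluation at $e^c$. The only difference is that you spell out details the paper leaves as ``routine'' (the context-monotonicity step for the structural rules and the $\odot$, $\setminus$, $/$ cases of the Truth Lemma, including the need for contextual cut), which is a faithful elaboration rather than a different argument.
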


\section{Resource-sensitive ``bringing-it-about''}
\label{sec:linear-biat}

We present the (non-normal modal) logic of agency of
\emph{bringing-it-about}~\cite{Elgesem97agency,Governatori05Elgesem},
and propose two versions of it in Linear Logic coined \RSBIAT (for
Resource-Sensitive ``bringing-it-about'') and \SRSBIAT (for Resource-Sensitive
``bringing-it-about'' with sequences of actions). \RSBIAT and \SRSBIAT
are respectively extensions of \MILL and of \PCMILL. In Section~\ref{sec:application}, we will illustrate the logic by representing a few actions of
agents, functions of artefacts, and their interactions.

We specialise the minimal modality studied in the previous sections to
a modality agency. In fact, for each agent $a$ in a set $\mathcal{A}$,
we define a modality $\biat_{a}$, and $\biat_{a}A$ specifies that
agent $a \in \mathcal{A}$ brings about $A$. As previously, to
interpret them in a modal Kripke resource frame, we take one
neighbourhood function $N_a$ for each agent $a$ that obeys
Condition~(\ref{princ:heredityN}) in Definition~\ref{def:mkrm}.  We
have $m \models \biat_a A$ iff $|| A || \in N_a(m)$.

\subsection{Bringing-it-about in classical logic}
The four following principles typically constitute the core of logics
of agency~\cite{Prn77actionsocial,Elgesem97agency,belnap01facing}:

\begin{enumerate}\label{enum:principlesofagency}
\item If something is brought about, then this something holds.
\item It is not possible to bring about a tautology.
\item If an agent brings about two things concomitantly then the agent also brings about the conjunction of these two things.
\item If two statements are equivalent, then bringing about one is equivalent to bringing about the other.
\end{enumerate}
Briefly, we explain how these principles are captured in classical
logic. Item~1 is a principle of success. It corresponds to the axiom
T: $\biat_i A \rightarrow A$. Item~2 has been open to some debate,
although Chellas is essentially the only
antagonist. (See~\cite{chellas69imperatives} and~\cite{chellas92}.)
It corresponds to the axiom $\lnot \biat_i \top$ (notaut). Item~3
corresponds to the axiom: $\biat_i A \wedge \biat_i B \rightarrow
\biat_i (A \wedge B)$. That is, if $i$ is doing $A$ while also doing
$B$, then we can deduce that $i$ is doing $A\land B$. The other way
round needs not be true. Item~4 confers to the concept of bringing
about the quality of being a modality, effectively obeying the rule of
equivalents: if $\vdash A \leftrightarrow B$ then $\vdash \biat_i A
\leftrightarrow \biat_i B$.
\medskip

\subsection{Resource-sensitive BIAT}
We now detail the logic of \RSBIAT. We capture the four
principles, adapted to the resource-sensitive framework, by means of
rules in the sequent calculus, cf.~Table~\ref{table:biatsequent}

The principle of item~1 is captured by $\biat_{a}$(refl) that entails
the linear version of T: $\biat_{a} A \implies A$.  In our
interpretation, it means that if an agent brings about $A$, then $A$
affects the environment.

Because of the difference between the unities in Linear Logic and in
classical logic, the principle of item~2 requires some attention. In
classical logic all tautologies are provably equivalent to the unity
$\top$. Say $A$ is theorem ($\vdash A$), we have $\vdash A \leftrightarrow
\top$. Hence, from the rule of equivalents, and the axiom $\vdash
\lnot \biat_a \top$ that indicates than no agent brings about the
tautological constant, one can deduce $\vdash \lnot \biat_a A$
whenever the formula $A$ is a theorem. In Linear Logic, the unity
$\mathbf{1}$ is \emph{not} provably equivalent to all theorems. Thus,
the axiom of BIAT must be changed into an inference rule ($\sim$ nec)
in \RSBIAT: if $\vdash A$, then $\biat_a A\vdash \bot$. It is
effectively a sort of ``anti-necessitation rule''. So, if a formula is
a theorem, if an agent brings it about, then the
contradiction is entailed. This amounts to negating $\biat_a A$, according to
intuitionistic negation, for every tautological formula $A$.

The principle of BIAT for combining actions (item~3 in the list) is
open to two interpretations here: a multiplicative one and an additive one.
The additive combination means that if there is a choice for
agent $a$ between bringing about $A$ and bringing about $B$, then
agent $a$ can bring about a choice between $A$ and $B$. $\biat_a\otimes$ means that if an agent $a$ brings
about action $A$ and brings about action $B$ then $a$ brings about
both actions $A \otimes B$. Moreover, in order to bring
about $A \otimes B$, the sum of the resources for $A$ and the resources for $B$ is required. 

Finally, the logics of the minimal modality already satisfy the
rule of equivalents for $\biat_a$: from $A \t B$ and $B \t A $ we
infer $\biat_{a}A \t \biat_{a}B$. This is inherited by \RSBIAT,
and it is all that is needed to capture the principle of item~4.

We enrich \HMILL, the Hilbert system for \MILL, as follows and we label this
system \HRSBIAT.  We add the following axioms.
\begin{table}[h]
\begin{center}
\framebox{\begin{tabular}{l}
all axioms of \HMILL\\
$\biat_a A \implies A$\\
$\biat_a A \otimes \biat_a B \implies \biat_a (A \otimes B)$\\
$\biat_a A \with \biat_a B \implies \biat_a (A \with B)$
\end{tabular}}\end{center}
\caption{Axiom schemata in \HRSBIAT}
\end{table}

The definition of deduction in \HRSBIAT extends the definition of deduction in \HMILL (Definition~\ref{def:derivation-hmill}) with the following possible rule to consider for the inductive steps.
$$
\AC{\stackrel{\mathcal{D}}{\vdash_\H A}} \rl{($\sim$ nec)}
\UC{\vdash_\H \biat_a A \implies \bot}
\dip
$$

On the side of the semantics, we propose the following conditions on
modal Kripke resource frames $(M, e, \circ, \geq, \{N_a\}, V)$.
The rule ($\sim$ nec) requires: 
\begin{equation}\label{eq:cond-no1}
\text{if}\; (X \in N_a(w)) \text{ and } (e \in X) \text{ then } (w \in V(\bot))
\end{equation}
The rule ($\biat_a$(refl)) requires: 
\begin{equation}\label{eq:reflexivity}
\text{if }\; X \in N_a(w)\;\text{then}\; w \in X 
\end{equation}
The condition corresponding to the multiplicative version of action combination ($\biat_{a}\otimes$) is  the following, where $X \circ Y = \{x \circ y \mid x \in X\; \text{and}\; y \in Y\}$, and\label{page:uparrow} $X^\uparrow = \{ y  \mid y \geq x\; \text{and}\; x \in X\}$:
\begin{equation}\label{cond:tensor}
\text{if}\; X \in N_{a}(x)\; \text{and}\;  Y \in N_{a}(y)\; \text{, then}\;  (X \circ Y)^\uparrow \in N_{a}(x \circ y)
\end{equation}
The condition on the frames corresponding to the additive version is the following:
\begin{equation}\label{cond:with}
\text{if}\; X \in N_{a}(x)\; \text{and}\; Y \in N_{a} (x),\; \text{then}\; X \cap Y \in N_{a}(x) 
\end{equation}

Next, we introduce a sequent calculus for \RSBIAT. The rules of \RSBIAT for $\biat \with$ show that in order to bring about the choice between $A \with B$ is enough to use the resources for one of the two. On the contrary, in order to bring
about $A \otimes B$, the sum of the resources for $A$ and the resources for $B$ is required. 

\begin{table}[h]
\begin{center}

\begin{tabular}[]{cc}

\AC{\vdash A}  \rl{$\sim$nec} \UC{\biat_a A \vdash \bot} \DisplayProof &
\AC{\Gamma \vdash \biat_a A} \AC{\Gamma \vdash \biat_a B} \rl{$\biat_a\with$} \BinaryInfC{$\Gamma \vdash \biat_{a}(A \with B)$} 
\DisplayProof\\
& \\

\AC{\Gamma, A \vdash B} \rl{$\biat_{a}$(refl)} \UC{\Gamma, \biat_{a} A \vdash B} \dip & 
\AC{\Gamma \vdash \biat_a A} \AC{\Delta \vdash \biat_a B} \rl{$\biat_{a}\otimes$} \BC{\Gamma, \Delta \vdash \biat_{a} (A \otimes B)} \dip
\end{tabular}
\end{center}
\caption{\RSBIAT (extends \MILL)\label{table:biatsequent}}
\end{table}

We can prove that \HRSBIAT\ and the sequent calculus for \RSBIAT are equivalent.
\begin{prop}
It holds that $\Gamma \t_\HRSBIAT A$ iff $\Gamma \t A$ is derivable in \RSBIAT.
\end{prop}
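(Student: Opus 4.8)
The plan is to extend the proof of Theorem~\ref{sequenthilbert} (the analogous equivalence for \HMILL\ and the sequent calculus \MILL), proceeding by induction on the length of derivations in each direction. Since \HRSBIAT\ adds to \HMILL\ the three axiom schemata $\biat_a A \implies A$, $\biat_a A \otimes \biat_a B \implies \biat_a(A \otimes B)$, $\biat_a A \with \biat_a B \implies \biat_a(A \with B)$, together with the rule $(\sim\text{nec})$, and the sequent calculus for \RSBIAT\ adds the rules $\biat_a(\text{refl})$, $\biat_a\otimes$, $\biat_a\with$, and $\sim\text{nec}$, all that remains is to account for these new ingredients; the propositional and minimal-modal cases are already handled by Theorem~\ref{sequenthilbert}.

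First I would treat the direction from \HRSBIAT\ to \RSBIAT. The three new axioms are easily derived as sequents: for instance $\biat_a A \otimes \biat_a B \vdash \biat_a(A \otimes B)$ follows from $\biat_a A \vdash \biat_a A$ and $\biat_a B \vdash \biat_a B$ by $\biat_a\otimes$ and then $\otimes$L, and then $\implies$R yields $\vdash \biat_a A \otimes \biat_a B \implies \biat_a(A \otimes B)$; similarly for the other two using $\biat_a(\text{refl})$ and $\biat_a\with$ respectively. For the rule $(\sim\text{nec})$: if $\vdash_\H A$, then by induction hypothesis $\vdash A$ is derivable in \RSBIAT, so $\sim\text{nec}$ gives $\biat_a A \vdash \bot$, and $\implies$R yields $\vdash \biat_a A \implies \bot$, matching the conclusion of the Hilbert rule. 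Combined with the $\implies$-rule (modus ponens, available as cut in the sequent calculus) and the induction hypothesis, every \HRSBIAT-deduction is simulated.

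For the converse direction, from \RSBIAT\ to \HRSBIAT, I would show each new sequent rule is admissible in \HRSBIAT\ using the deduction theorem (Theorem~\ref{thm:deduction}, which extends to \HRSBIAT\ since the only new rule $(\sim\text{nec})$ also has empty context) together with the new axioms. For $\biat_a(\text{refl})$: from $\Gamma, A \vdash B$ we get by induction hypothesis $\Gamma, A \vdash_\HRSBIAT B$, hence $\Gamma \vdash_\HRSBIAT A \implies B$; composing with the axiom $\biat_a A \implies A$ via transitivity of $\implies$ gives $\Gamma \vdash_\HRSBIAT \biat_a A \implies B$, i.e.\ $\Gamma, \biat_a A \vdash_\HRSBIAT B$. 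For $\biat_a\otimes$: from $\Gamma \vdash \biat_a A$ and $\Delta \vdash \biat_a B$ we obtain $\Gamma, \Delta \vdash_\HRSBIAT \biat_a A \otimes \biat_a B$ (using the tensor-introduction axiom of \HMILL\ and modus ponens), then compose with $\biat_a A \otimes \biat_a B \implies \biat_a(A \otimes B)$; the case of $\biat_a\with$ is analogous using $\with$-rule and the corresponding axiom, and $\sim\text{nec}$ in the sequent calculus is literally the Hilbert rule $(\sim\text{nec})$ followed by modus ponens with the leaf $\biat_a A \vdash \biat_a A$. The main obstacle, as in Theorem~\ref{sequenthilbert}, is the bookkeeping of hypothesis occurrences in the multiset contexts of Hilbert deductions when applying the deduction theorem and modus ponens—ensuring each occurrence of an assumption is used exactly once—but this is routine given the machinery already set up for \HMILL\ in~\cite{AvronTCS1988} and nothing new arises from the modal rules since their active contexts are empty or propagate transparently.
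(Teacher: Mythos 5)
Your proposal is correct and follows essentially the same route as the paper: an induction on the length of derivations extending Theorem~\ref{sequenthilbert}, deriving the new \HRSBIAT axioms in the sequent calculus via $\biat_a\otimes$, $\biat_a\with$, $\biat_a$(refl) followed by $\otimes$L/$\implies$R, and conversely simulating the new sequent rules in the Hilbert system via the deduction theorem, the new axioms, and modus ponens. The paper's proof is just a one-example sketch of exactly this argument, so your more detailed case analysis matches it.
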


 \begin{proof}(Sketch) The proof is again an induction on the length of derivations. For example, in one direction, axiom $\biat_a A \otimes \biat_a B \implies \biat_a (A \otimes B)$ is derivable in the sequent calculus by simply applying $\otimes L$ and $\implies R$ to  $\biat_a A, \biat_a B \t \biat_a (A \otimes B)$ which has been obtained from axioms by means of $\biat_a\otimes$.
 \end{proof}

We can now prove soundness and completeness of \RSBIAT.
\begin{theorem} \RSBIAT is sound and complete wrt.~the class of modal Kripke frames that satisfy (\ref{eq:cond-no1}), (\ref{eq:reflexivity}), (\ref{cond:tensor}), and (\ref{cond:with}).
\end{theorem}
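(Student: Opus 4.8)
The plan is to follow exactly the two-step route already used for \MILL. For \emph{soundness} I would extend the inductive proof of Theorem~\ref{thm:soundness-hmill} to \HRSBIAT and invoke the equivalence of the Hilbert system with the sequent calculus \RSBIAT (equivalently, check the four new sequent rules directly). For \emph{completeness} I would extend the canonical-model construction of Definition~\ref{def:mc} and Lemmas~\ref{lem:mc}--\ref{lem:truth}.

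For soundness, the two observations that carry the whole argument are the semantic identities $||A \otimes B|| = (||A|| \circ ||B||)^\uparrow$ and $||A \with B|| = ||A|| \cap ||B||$, both immediate from the truth conditions. Using them, each new principle is matched to its frame condition: the axiom $\biat_a A \implies A$ (hence the rule $\biat_a$(refl)) is valid because of (\ref{eq:reflexivity}) --- if $n \models \biat_a A$ then $||A|| \in N_a(n)$, so $n \in ||A||$; the axiom $\biat_a A \otimes \biat_a B \implies \biat_a(A\otimes B)$ (hence $\biat_a\otimes$) is valid because of (\ref{cond:tensor}) together with heredity of $N_a$, condition (\ref{princ:heredityN}): from $||A|| \in N_a(n_1)$ and $||B|| \in N_a(n_2)$ one gets $||A\otimes B|| = (||A||\circ||B||)^\uparrow \in N_a(n_1\circ n_2)$ and then lifts this along $\geq$; the axiom $\biat_a A \with \biat_a B \implies \biat_a(A \with B)$ (hence $\biat_a\with$) is valid because of (\ref{cond:with}); and the rule ($\sim$ nec) is sound because of (\ref{eq:cond-no1}) --- if $\models A$ then $e \in ||A||$, so any $x \models \biat_a A$ has $||A|| \in N_a(x)$ with $e \in ||A||$, hence $x \models \bot$. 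The rule of equivalents for $\biat_a$, inherited from \MILL, is handled as in Theorem~\ref{thm:soundness-hmill}.

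For completeness, I would keep $M^c, e^c, \circ^c, \geq^c, V^c$ exactly as in Definition~\ref{def:mc} and set, for each agent $a$, $N_a^c(\Gamma) = \{\, |\A|^c \mid \Gamma \vdash \biat_a \A \,\}$. Replaying Lemma~\ref{lem:mc} shows the monoid-with-preorder part is unchanged, that each $N_a^c$ is well-defined (from $|\A|^c = |\B|^c$ one extracts $\A \vdash \B$ and $\B \vdash \A$, then uses the rule of equivalents for $\biat_a$ and $cut$), and that each $N_a^c$ satisfies (\ref{princ:heredityN}) (by $\otimes$L and $cut$). What remains is to check the four extra frame conditions in $\mathcal{M}^c$, and the heart of this is to pin down the right witness formulas through the two provable identities $|\A \otimes \B|^c = (|\A|^c \circ^c |\B|^c)^\uparrow$ and $|\A \with \B|^c = |\A|^c \cap |\B|^c$. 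Granting these: (\ref{cond:tensor}) follows by applying $\biat_a\otimes$ to $\Gamma \vdash \biat_a\A$ and $\Delta \vdash \biat_a\B$, giving $\Gamma \sqcup \Delta \vdash \biat_a(\A \otimes \B)$; (\ref{cond:with}) follows by applying $\biat_a\with$; (\ref{eq:reflexivity}) follows by deriving $\biat_a\A \vdash \A$ from $\A \vdash \A$ with $\biat_a$(refl) and cutting against $\Gamma \vdash \biat_a\A$; and (\ref{eq:cond-no1}) follows since $e^c = \emptyset \in |\A|^c$ means $\vdash \A$, so ($\sim$ nec) gives $\biat_a\A \vdash \bot$ and $cut$ gives $\Gamma \vdash \bot$, i.e. $\Gamma \in V^c(\bot)$. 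Thus $\mathcal{M}^c$ belongs to the required class. The Truth Lemma~\ref{lem:truth} then extends with the single new clause $\A = \biat_a\B$, proved verbatim like its $\Box$ clause, and completeness is concluded as in Theorem~\ref{thm:compl-sc}, now using that $\mathcal{M}^c$ is in the restricted class.

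The one genuinely delicate step I anticipate is the verification of the four frame conditions in the canonical model, and within it the identity $(|\A|^c \circ^c |\B|^c)^\uparrow = |\A \otimes \B|^c$: this requires careful bookkeeping with the operation $.^*$ on multisets together with repeated $\otimes$L, $\otimes$R, associativity of $\otimes$ and $cut$. Everything else is a routine replay of the corresponding \MILL proofs.
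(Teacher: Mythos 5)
Your proposal is correct and takes essentially the same route as the paper's (sketched) proof: soundness is checked rule-by-rule against conditions (\ref{eq:cond-no1})--(\ref{cond:with}), and completeness is obtained by enriching the canonical model of Definition~\ref{def:mc} with $N^c_a(\Gamma)=\{\,\mid \A\mid^c \;\mid\; \Gamma\vdash \biat_a \A\,\}$ and verifying the four frame conditions there, exactly as the paper does for its two representative cases (($\sim$nec) and ($\biat_a\otimes$)). The only minor difference is that for condition (\ref{cond:tensor}) the paper obtains your key identity $(\mid \A\mid^c \circ^c \mid \B\mid^c)^\uparrow = \mid \A\otimes \B\mid^c$ indirectly, via the Truth Lemma~\ref{lem:truth} and the truth condition for $\otimes$, rather than by the direct proof-theoretic bookkeeping you anticipate; both work, and your explicit appeal to heredity (\ref{princ:heredityN}) in the soundness of ($\biat_a\otimes$) is in fact more careful than the paper's wording.
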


\begin{proof} (Sketch) We just show two correspondences. 

\emph{Condition~(\ref{eq:cond-no1}) and rule ($\sim$nec).} ($\sim$nec) is sound. 
Assume 
that for every model, $e \models A$. We need to show that $e \models \biat_a A \implies
\bot$. That is, for every $x$, if $x \models \biat_a A$, then $x$
models $\bot$. If $x \models \biat_a A$, then by definition, $||A||
\in N_a(x)$. Since $A$ is a theorem, $e \in ||A||$, thus by Condition
\ref{eq:cond-no1}, $x \in V(\bot)$, so $x\models \bot$.
For completeness, it suffices to adapt our canonical model
construction. Build the canonical model for \RSBIAT as in
Def.~\ref{def:mc} (we have now more valid sequents). Now suppose
(1)~$X \in N^c_a(\Gamma)$, and (2)~$e^c \in X$. By definition of $N^c_a$
and of $\mid . \mid^c$, there is $A$, s.t.\ $\mid A\mid^c = X$, (1) $\Gamma \vdash
\biat_{a}A$ and (2) $\vdash A$. From (2), and ($\sim$nec): $\biat_{a}A
\vdash \bot$. From (1), and previous, we obtain $\Gamma \vdash \bot$
using (cut). By definition of $V^c$, $\Gamma \in V^c(\bot)$.

\par

\emph{Condition~(\ref{cond:tensor}) and rule ($\biat_{a}\otimes$).}
($\biat_{a}\otimes$) is sound. Assume $e \models \Gamma^* \implies
\biat_a A$ and $e \models \Delta^* \implies \biat_a B$. Then, for all
$x$ that make $\Gamma$ true, $||A|| \in N_a(x)$ and for all $y$ that
make $\Delta$ true, $||B|| \in N_a(y)$. By (\ref{cond:tensor}), $||A||
\circ ||B|| \in N_a(x \circ y)$, so for any $x \circ y$ that make
$(\Gamma, \Delta)^*$ true, $x \circ y \models \biat_a (A \otimes
B)$. For completeness, suppose $X \in N^c_a(\Gamma)$ and $Y \in
N^c_a(\Delta)$. By definition of $N^c_a$ and of $\mid . \mid^c$, there
is $A$ and $B$, with $\mid A\mid^c = X$, and $\mid B\mid^c = Y$, s.t.~$\Gamma \vdash \biat_{a}A$, and $\Delta \vdash \biat_{a}B$. By ($\biat_{a}\otimes$), we obtain $\Gamma, \Delta \vdash \biat_a(A \otimes B)$ and thus
$\mid A \otimes B \mid^c \in N^c_a(\Gamma \sqcup \Delta)$ by definition of $\mid . \mid^c$. The definition of $\circ^c$ gives us $\mid A \otimes B \mid^c \in N^c_a(\Gamma \circ^c \Delta)$. By the Truth Lemma, we have that $||A \otimes B||^{\mathcal{M}^c} \in N^c_a(\Gamma \circ^c \Delta)$. Thus $(X \circ^c Y)^\uparrow \in N^c_a(\Gamma \circ^c \Delta)$.

\end{proof}

Therefore, also our extensions of Hilbert systems and sequent calculus are sound and complete wrt.~the modal Kripke resource frames restricted to the relevant conditions given in this section. 

\begin{corollary} \HRSBIAT\ is sound and complete wrt.~the class of modal Kripke  frames that satisfy (\ref{eq:cond-no1}), (\ref{eq:reflexivity}), and (\ref{cond:tensor}).
\end{corollary}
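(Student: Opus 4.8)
The plan is to obtain this corollary cheaply, in exactly the way the analogous corollary for \HMILL was obtained after the Truth Lemma: by chaining three facts that are already in place. First, the Proposition stating that $\Gamma \t_\HRSBIAT A$ if and only if the sequent $\Gamma \t A$ is derivable in the sequent calculus \RSBIAT. Second, the preceding Theorem establishing that \RSBIAT is sound and complete with respect to the class of modal Kripke resource frames satisfying the conditions~(\ref{eq:cond-no1}), (\ref{eq:reflexivity}), (\ref{cond:tensor}) (together with~(\ref{cond:with}) for the additive combination axiom). Third, the convention, inherited from \MILL and used throughout, that a sequent $\Gamma \t A$ is valid in such a frame precisely when $\models \Gamma^* \implies A$, i.e.\ $e \models \Gamma^* \implies A$ holds in every model built over that frame.

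With these in hand, both halves are immediate. For soundness: assume $\Gamma \t_\HRSBIAT A$; by the equivalence Proposition the sequent $\Gamma \t A$ is derivable in \RSBIAT; by soundness of \RSBIAT it is valid in every frame of the relevant class, hence $e \models \Gamma^* \implies A$ in every model over such a frame. For completeness: assume $e \models \Gamma^* \implies A$ in every model over a frame of the relevant class; by the reading of sequent validity this is just validity of the sequent $\Gamma \t A$; by completeness of \RSBIAT the sequent is derivable in the sequent calculus; and by the equivalence Proposition again $\Gamma \t_\HRSBIAT A$.

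I do not expect a genuine obstacle here—the work has already been done in the proof of the \RSBIAT theorem and in the equivalence Proposition. The one place to be careful is bookkeeping of the frame conditions: one must check that the list of conditions invoked is exactly what is needed to validate \emph{every} axiom and rule of \HRSBIAT, i.e.\ $\biat_a A \implies A$ against~(\ref{eq:reflexivity}), the ($\sim$ nec) rule against~(\ref{eq:cond-no1}), the multiplicative combination axiom $\biat_a A \otimes \biat_a B \implies \biat_a(A \otimes B)$ against~(\ref{cond:tensor}), and the additive combination axiom $\biat_a A \with \biat_a B \implies \biat_a(A \with B)$ against~(\ref{cond:with}); these correspondences are precisely the ones verified in the soundness direction of the \RSBIAT theorem, so the argument goes through, and if the additive axiom is retained in \HRSBIAT then condition~(\ref{cond:with}) must be retained in the statement as well.
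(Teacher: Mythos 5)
Your proposal is correct and follows essentially the same route as the paper, which obtains the corollary by chaining the equivalence between \HRSBIAT and the sequent calculus \RSBIAT with the soundness and completeness theorem for \RSBIAT over the restricted class of frames. Your bookkeeping remark is also apt: since \HRSBIAT retains the additive axiom $\biat_a A \with \biat_a B \implies \biat_a(A \with B)$, condition~(\ref{cond:with}) should appear in the statement alongside (\ref{eq:cond-no1}), (\ref{eq:reflexivity}), and (\ref{cond:tensor}), and its absence from the corollary as printed appears to be an oversight rather than a substantive difference.
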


Moreover, \RSBIAT enjoys cut elimination.
\begin{theorem} \label{thm:cut-elim-linearBIAT} Cut elimination holds for \RSBIAT.
\end{theorem}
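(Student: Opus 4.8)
The plan is to extend the cut-elimination proof for \MILL (Theorem~\ref{thm:cut-elim}) by handling the new rules $\sim$nec, $\biat_a$(refl), $\biat_a\with$, and $\biat_a\otimes$. The overall structure of the argument is unchanged: induction on the cutrank, with an inner induction on the level of the topmost maximal-rank cut, reducing either the rank of the cut or its level. We inherit all the cases from \MILL; what remains is to check the new situations where the cut formula is principal in a premise coming from one of the new rules, and to note that the ``second main case'' (the cut formula not principal in one premise) goes through as before by permuting the cut upward past the last rule $R$, since each new rule has premises of strictly smaller size and preserves cutrank.

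First I would treat $\biat_a$(refl). Here the cut formula cannot be the formula $\biat_a A$ introduced by $\biat_a$(refl) on the left unless the matching right premise is $\Gamma' \vdash \biat_a A$, introduced by either $\biat_a\with$ or $\biat_a\otimes$. So I would consider the principal/principal pairs: ($\biat_a\otimes$ on the right, $\biat_a$(refl) on the left) reduces by pushing the cut inside---from $\Gamma \vdash \biat_a A$, $\Delta \vdash \biat_a B$ concluding $\Gamma,\Delta \vdash \biat_a(A\otimes B)$, cut against $\Sigma, A\otimes B \vdash C$ (the $\biat_a$(refl) premise, after noting $\biat_a(A\otimes B)$ on the left came from $A\otimes B$ on the left); we recombine to cut $\biat_a A$ and $\biat_a B$ in turn, or more simply cut on $A\otimes B$ directly since $\Gamma,\Delta \vdash A \otimes B$ is derivable from the premises via $\biat_a$(refl) and $\otimes$R---in all cases the new cuts are on strictly smaller formulas. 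The ($\biat_a\with$, $\biat_a$(refl)) pair is analogous, using that $\Gamma \vdash A$ and $\Gamma \vdash B$ are obtainable and applying $\with$R. The case where $\biat_a A$ from $\biat_a$(refl) meets $\Box$(re)/$\biat_a$(re) on the right (i.e.\ a $\biat_a$(re)-like equivalence rule, here the minimal-modality rule specialised to $\biat_a$) is handled exactly as the $\Box$(re)/$\Box$(re) case of Theorem~\ref{thm:cut-elim}, replacing a cut on $\biat_a C$ by two cuts on $C$.

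Next I would handle $\sim$nec: its conclusion $\biat_a A \vdash \bot$ has $\bot$ as an atom, so if $\bot$ is the cut formula it is never principal on the right (atoms are only introduced by ax), hence this falls under the second main case; if $\biat_a A$ is the cut formula then it is not principal in $\biat_a A \vdash \bot$ either (it is a side formula of $\sim$nec, not introduced by it), so again the second main case applies and the cut permutes upward---eventually meeting an axiom $\biat_a A \vdash \biat_a A$, at which point the cut is deleted. Finally, principal/principal pairs involving $\biat_a\otimes$ and $\biat_a\with$ against each other cannot arise since their principal formulas on the right have different shapes ($\biat_a(A\otimes B)$ versus $\biat_a(A\with B)$), and against $\biat_a$(re) they reduce as in the $\Box$(re) case. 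I expect the only delicate bookkeeping---and thus the main obstacle---to be the $\biat_a$(refl) principal/principal reductions, where one must be careful that the side-formula contexts ($\Gamma$, $\Delta$, $\Sigma$) are split and reassembled correctly and that the replacement cuts genuinely have lower rank; everything else is a routine adaptation of the \MILL argument, and I would state it as such rather than writing every reduction in full.
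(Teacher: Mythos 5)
There is a genuine gap, and it sits exactly where the paper's own proof puts its main new case: your treatment of ($\sim$nec). In that rule the premise is $\vdash A$ and the conclusion is $\biat_a A \vdash \bot$; the formula $\biat_a A$ does not occur in the premise, so it \emph{is} introduced by the rule and is therefore principal in the conclusion (by the paper's definition of principal formula) --- it is not a side formula, as you claim. Consequently, when the cut formula is $\biat_a A$ and the premise containing it on the left is the conclusion of ($\sim$nec), you cannot permute the cut upward: there is no occurrence of $\biat_a A$ in $\vdash A$ to cut against, so the ``second main case'' manoeuvre is inapplicable and the cut never ``meets an axiom $\biat_a A \vdash \biat_a A$''. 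These are genuine principal/principal configurations --- ($\sim$nec) against $\biat_a$(re), and also against $\biat_a\otimes$ and $\biat_a\with$ when $A$ has the form $B\otimes C$ or $B\with C$ --- and each needs its own reduction. The paper's proof displays precisely the first of these: cutting $\biat_a A \vdash \biat_a B$ (from $\biat_a$(re) with premises $A\vdash B$ and $B\vdash A$) against $\biat_a B \vdash \bot$ (from ($\sim$nec) with premise $\vdash B$) is reduced by first cutting $\vdash B$ with $B\vdash A$ to obtain $\vdash A$, of strictly smaller rank, and then reapplying ($\sim$nec) to get $\biat_a A \vdash \bot$. Your proposal, as written, silently skips all of these cases, so the induction does not close.

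The rest of your argument is essentially the paper's: the same double induction on cutrank and level, and your ($\biat_a\otimes$)/($\biat_a$(refl)) reduction --- replacing the cut on $\biat_a(A\otimes B)$ by cuts on $\biat_a A$, $\biat_a B$ (via ax and $\biat_a$(refl)) and on $A\otimes B$ --- achieves the same rank decrease as the paper's reduction, which instead pushes the cuts into the subderivations ending in $\Gamma\vdash A$, $\Delta\vdash B$ and the premises of $\otimes$L; either variant is acceptable. (A small slip: in the ($\biat_a$(re))/($\biat_a$(refl)) case one new cut on $C$ suffices, not two, but that is harmless.) To repair the proposal you must add the ($\sim$nec) principal/principal reductions sketched above, in the style of the paper's first displayed case.
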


\begin{proof}(Sketch) We extend the proof of Theorem (\ref{thm:cut-elim}) by presenting a number of new cases for the cut formula being principal in both premises of the cut rule. The other cases can be treated similarly. 
The cut formula has been introduced by $\biat_a$(re) and $\sim$nec

$$
\begin{tabular}{ccc}
\AC{A \t B}
\AC{B \t A}
\rl{$\biat_a$(re)}
\BC{\biat_a A \t \biat_a B}
\AC{\t B}
\rl{$\sim$nec}
\UC{\biat_a B \t \bot}
\rl{cut}
\BC{\biat_a A \t \bot}
\dip

&

$\rightsquigarrow$

&

\AC{\t B}
\AC{B \t A}
\rl{cut}
\BC{\t A}
\rl{$\sim$nec}
\UC{\biat_a A \t \bot}
\dip
\\
\end{tabular}
$$

The cut formula has been introduced by $\biat_a\otimes$ and $\biat_a$(refl).

$$
\AC{\Gamma' \t A}
\AC{...}
\rl{}
\BC{\Gamma \t \biat_a A}
\AC{\Delta' \t B}
\AC{...}
\rl{}
\BC{\Delta \t \biat_a B}
\rl{}
\BC{\Gamma, \Delta \t \biat_a (A \otimes B)}
\AC{\Sigma', A \t C'}
\AC{\Sigma'',B \t C''}
\rl{}
\BC{\Sigma, A \otimes B \t C}
\rl{}
\UC{\Sigma, \biat_a (A \otimes B) \t C}
\rl{cut}
\BC{\Gamma, \Delta, \Sigma \t C}
\dip
$$

It can be reduced by pushing the cut upwards.

$$
\AC{\Gamma' \t A}
\AC{\Sigma', A \t C'}
\rl{cut}
\BC{\Gamma',\Sigma' \t C'}
\AC{\Delta' \t B}
\AC{\Sigma'',B \t C''}
\rl{cut}
\BC{\Delta', \Sigma'' \t C''}
\rl{...}
\BC{...}
\rl{...}
\UC{\Gamma, \Delta, \Sigma \t C}
\dip
$$
\end{proof}

Once again, it is easy to see that cut elimination entails the subformula property for \RSBIAT. Using the same arguments as for Theorem~\ref{thm:PSPACE}, it is clear that we can decide polynomial space whether a sequent is valid in \RSBIAT.
\begin{theorem} Proof search complexity for \RSBIAT is in PSPACE.
\end{theorem}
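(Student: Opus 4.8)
The argument will essentially replay the proof of Theorem~\ref{thm:PSPACE} for \MILL, now for the calculus \RSBIAT. The key structural facts needed are already in place: \RSBIAT\ admits cut elimination (Theorem~\ref{thm:cut-elim-linearBIAT}) and therefore the subformula property, so for any provable sequent there is a cut-free proof all of whose formulas are subformulas of the end sequent. What remains is to check that the space/depth analysis used for \MILL still goes through once the four new rules $\biat_a$(refl), $\biat_a\with$, $\biat_a\otimes$, and $\sim$nec are added.

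\textbf{Key steps.} First I would observe that every \RSBIAT\ rule other than (cut) is \emph{size-decreasing from conclusion to premises} in the appropriate sense: $\biat_a$(refl) replaces $\biat_a A$ by $A$ in the context, $\sim$nec removes a $\biat_a A$ and a $\bot$, and $\biat_a\with$, $\biat_a\otimes$ both decompose $\biat_a(A\with B)$ resp.\ $\biat_a(A\otimes B)$ into strictly smaller modal formulas in the premises; moreover the binary rules split the context, so the total size of the premises is bounded by the size of the conclusion. Consequently, as in the \MILL\ case, any cut-free proof can be taken to have branches of depth at most linear in the size of the end sequent, each sequent along a branch has size at most quadratic in the size of the end sequent (multiplicities of subformulas are bounded), and by the subformula property only subformulas of the end sequent occur. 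Then the same nondeterministic procedure applies: guess the proof branch by branch, reusing space across branches, checking each branch within polynomial space; this gives proof search in $\mathrm{NPSPACE} = \mathrm{PSPACE}$ by Savitch's theorem.

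\textbf{Main obstacle.} The only point requiring genuine care is verifying the measure-decrease claim uniformly across the new rules, in particular that the proof-height bound remains linear: the rule $\biat_a$(refl) does not reduce logical complexity of a formula but merely strips one $\biat_a$, and $\sim$nec introduces a $\bot$ that is not literally a subformula of $\biat_a A$ — so one should either include $\bot$ among the admissible atoms in the ``subformula'' bookkeeping (it is a designated atom, hence already part of the signature and of bounded size) or note that $\bot$ contributes only a constant to the sequent size. Once one fixes the measure (total number of connective occurrences plus number of $\biat_a$ occurrences in the sequent, say) and checks it strictly decreases along every non-cut rule, the depth bound is immediate and the rest of the argument is verbatim that of Theorem~\ref{thm:PSPACE}.
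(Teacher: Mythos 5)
Your proposal is correct and takes essentially the same route as the paper, which likewise derives the (suitably generalized) subformula property from cut elimination (Theorem~\ref{thm:cut-elim-linearBIAT}) and then reuses verbatim the depth/space analysis of Theorem~\ref{thm:PSPACE}. One small slip — in $\biat_a\with$ the context $\Gamma$ is duplicated rather than split, so the \emph{total} premise size is not bounded by the conclusion — is harmless, because the branch-by-branch size-decrease argument you also give is what actually yields the linear depth bound and the NPSPACE $=$ PSPACE conclusion.
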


\subsection{\RSBIAT with sequences of actions}
\label{sec:lbiat-sa}

So far, we have discussed how to control weakening and contraction in order to provide a resource sensitive account of agency. There is one important structural rule that we did not discuss, namely the exchange rule. In this section, we extend \RSBIAT by introducing the \emph{non-commutative} multiplicative conjunction $\odot$, and its two associated order-sensitive implications.

The significance of this move goes beyond the technical aspect, which
is rather straightforward at that point.  Indeed, a recurring point of
contention against the logics of bringing-it-about is the absence of a
basic notion of time.  Non-commutative composition of formulas will
provide to us an immediate and natural way of talking, if not about
time proper, at least about sequences of actions. Reading $A \odot B$
as ``first $A$ then $B$'', we will also read $(\biat_a A) \odot
(\biat_b B)$ as ``first $a$ brings about $A$ then $b$ brings about
$B$''.

\medskip

We obtain \SRSBIAT by adding the non-commutative versions of  $\biat_a\otimes$ and rephrasing the commutative rules by means of the context notation. The rule $\biat_a$(refl) can now operate both in commutative and non-commutative contexts. 

\begin{table}[h]
\begin{center}
\begin{tabular}[]{cc}
\AC{\Gamma[A] \vdash B} \rl{$\biat_{a}$(refl)} \UC{\Gamma[\biat_{a} A] \vdash B} \dip & 
\AC{\Gamma \vdash \biat_a A} \AC{\Delta \vdash \biat_a B} \rl{$\biat_{a}\odot$} \BC{\Gamma; \Delta \vdash \biat_{a} (A \odot B)} \dip\\
\end{tabular}
\end{center}
\caption{Resource ``bringing-it-about'' with sequences of actions (extends PCL and \HRSBIAT)}
\end{table}

Note that the presentation of $\sim$nec, $\biat_a\otimes$, and $\biat_a\with$ is not affected 
by the generalisation to partially commutative. 
$\biat_{a}$(refl) states that we can introduce the modality also in non-commutative contexts. $\biat_{a}\odot$ adapts the principle of composition of actions (cf.~item~3, p.~\pageref{enum:principlesofagency}) to sequences; it states that we can compose two ordered actions into one sequence of actions. 

In order to offer a semantics to our extension of \RSBIAT with sequences of actions, we need to add a condition on the neighbourhood functions that deals with the non-commutative operator. Specifically, we need the following condition:
\begin{equation}\label{cond:next}
\text{if}\; X \in N_{a}(x)\; \text{and}\;  Y \in N_{a}(y)\; \text{, then}\;  (X \bullet Y)^\uparrow \in N_{a}(x \bullet y)
\end{equation}

We obtain naturally the semantic determination.
\begin{theorem} \SRSBIAT is sound and complete wrt.~the class of partially commutative modal Kripke resource models that satisfy 
 (\ref{eq:cond-no1}), (\ref{eq:reflexivity}), (\ref{cond:tensor}),
  (\ref{cond:with}), and (\ref{cond:next}).
\end{theorem}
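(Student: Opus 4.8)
The plan is to prove soundness and completeness for \SRSBIAT by assembling the pieces already established for its constituent fragments, adding only the treatment of the new non-commutative agency rules $\biat_a$(refl) in partially commutative contexts and $\biat_a\odot$, together with the new frame condition (\ref{cond:next}). Since \SRSBIAT extends \PCMILL by the four agency rules $\sim$nec, $\biat_a\with$, $\biat_a\otimes$, $\biat_a$(refl), and $\biat_a\odot$, and since \PCMILL is already known to be sound and complete (Theorem in Section~\ref{sec:mill-nc}) and \RSBIAT handles the first four agency rules, the only genuinely new work concerns $\biat_a\odot$ and the non-commutative generalisation of $\biat_a$(refl).

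First I would address soundness. One proceeds by induction on the length of derivations, where all cases except the two new ones are inherited from the soundness proofs of \PCMILL and \RSBIAT. For $\biat_a\odot$, assume $e \models \Gamma^+ \implies \biat_a A$ and $e \models \Delta^+ \implies \biat_a B$. Then for every $x$ satisfying $\Gamma$ one has $||A|| \in N_a(x)$, and for every $y$ satisfying $\Delta$ one has $||B|| \in N_a(y)$; by condition (\ref{cond:next}), $(||A|| \bullet ||B||)^\uparrow \in N_a(x \bullet y)$. One then checks that $(||A|| \bullet ||B||)^\uparrow = ||A \odot B||$ --- the inclusion $\subseteq$ uses heredity (Proposition~\ref{prop:modal-heredity}) and the truth condition for $\odot$, and $\supseteq$ is the truth condition for $\odot$ directly --- so $x \bullet y \models \biat_a(A \odot B)$, which is exactly what is needed since $x \bullet y$ ranges over the worlds satisfying $(\Gamma ; \Delta)^+$. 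For the generalised $\biat_a$(refl) in a context $\Gamma[-]$, one uses condition (\ref{eq:reflexivity}) exactly as in the \RSBIAT case; the context being partially commutative rather than multiset-flat changes nothing in the semantic argument because the truth condition reduces $\Gamma[\biat_a A]$ to a statement about the world obtained by combining the pieces with $\circ$ and $\bullet$, and reflexivity lets us replace $\biat_a A$'s contribution by $A$'s.

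For completeness I would build the canonical model $\mathcal{M}^c_\bullet$ exactly as in Section~\ref{sec:mill-nc}, now using the enriched provability relation of \SRSBIAT, with one neighbourhood function $N^c_a(\Gamma) = \{\mid A \mid^c \mid \Gamma \vdash \biat_a A\}$ per agent. The work splits into (a) verifying $\mathcal{M}^c_\bullet$ is a partially commutative modal Kripke resource model satisfying all of (\ref{eq:cond-no1}), (\ref{eq:reflexivity}), (\ref{cond:tensor}), (\ref{cond:with}), (\ref{cond:next}); and (b) the Truth Lemma. For (a), the monoid/order/bifunctoriality/entropy parts and conditions (\ref{eq:cond-no1})--(\ref{cond:with}) are established exactly as in the proofs for \PCMILL and \RSBIAT (items 1--5 of the \PCMILL sketch, plus the \RSBIAT correspondences), noting that well-definedness of each $N^c_a$ still follows from $\biat_a$(re) --- which is derivable from the rules --- plus cut. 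The only new verification is condition (\ref{cond:next}): given $X \in N^c_a(\Gamma)$ and $Y \in N^c_a(\Delta)$, pick $A, B$ with $\mid A\mid^c = X$, $\mid B\mid^c = Y$, $\Gamma \vdash \biat_a A$, $\Delta \vdash \biat_a B$; apply $\biat_a\odot$ to get $\Gamma ; \Delta \vdash \biat_a(A \odot B)$, hence $\mid A \odot B\mid^c \in N^c_a(\Gamma \bullet^c \Delta)$ by definition of $\bullet^c$; then the Truth Lemma identifies $\mid A \odot B\mid^c$ with $||A \odot B||^{\mathcal{M}^c_\bullet} = (X \bullet^c Y)^\uparrow$, giving the result. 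For (b), the Truth Lemma is by structural induction; the propositional and commutative-modal cases are inherited, the $\odot$, $\setminus$, $/$ cases come from \PCMILL, and the $\biat_a B$ case is the same chain of equivalences as in Lemma~\ref{lem:truth} with $N^c$ replaced by $N^c_a$. Completeness then follows by the standard argument: if $\not\vdash \Gamma^+ \implies A$ then by the Truth Lemma $\mathcal{M}^c_\bullet$ falsifies it at $e^c$, so it is not valid in the relevant model class.

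The main obstacle, such as it is, is purely bookkeeping: keeping straight which frame condition validates which rule when two independent extensions (non-commutativity from \PCMILL and agency from \RSBIAT) are combined, and in particular making sure the set-theoretic identity $||A \odot B|| = (||A|| \bullet ||B||)^\uparrow$ is used consistently in both the soundness direction and inside the canonical-model verification of (\ref{cond:next}). There is no deep new idea; the theorem is a modular assembly, and the proof sketch should simply say so and point to the two or three genuinely new lines.
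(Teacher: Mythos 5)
Your proposal is correct and follows exactly the route the paper intends: the paper states this theorem without an explicit proof (``We obtain naturally the semantic determination''), leaving it as the modular assembly of the \PCMILL canonical-model construction and the \RSBIAT frame-condition correspondences, which is precisely what you carry out, with the only new content being the soundness of $\biat_a\odot$ via Condition~(\ref{cond:next}) and its canonical-model verification mirroring the paper's treatment of (\ref{cond:tensor}). The identity $||A \odot B|| = (||A|| \bullet ||B||)^\uparrow$ you use is immediate from the truth condition for $\odot$ (no appeal to heredity of formulas is needed there, though the neighbourhood heredity Condition~(\ref{princ:heredityN}) is what lifts membership from $N_a(x \bullet y)$ to larger worlds), so the argument goes through as written.
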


\section{Application: manipulation of artefacts}
\label{sec:application}

\subsection{Artefacts}

Our application lies in the
reasoning about artefact's function and tool use. By endorsing what we may call an \emph{agential} stance, we view artefacts as
special kind of agents. They are characterised by the fact that they
are designed by some other agent in order to achieve a purpose in a
particular environment.  An important aspect of the modelling of
artefacts is their interaction with the environment and with the
agents that \emph{use} the artefact to achieve a specific
goal~\cite{garbacz04,borgoetal09handbook,HoukesVermaas2010,Kroes2012}.
Briefly, we can view an artefact as an object that in presence of a
number of preconditions $c_{1}, \dots, c_{n}$ produces the outcome
$o$. In this work, we want to represent the function of artefacts by
means of logical formulas and to view the correct behaviour of an
artefact by means of a form of reasoning. When reasoning about
artefacts and their outcomes, we need to be careful in making all the
conditions of use of the artefact explicit, otherwise we end up facing
the following unintuitive cases. Imagine we represent the behaviour of
a screwdriver as a formula of classical logic that states that if
there is a screw $S$, then we can tighten it $T$.  We simply describe
the behaviour of the artefact as a material implication $S \rightarrow
T$.  In classical logic, we can infer that by means of a single
screwdriver we can tighten two screws: $S, S, S \rightarrow T \vdash T
\wedge T$. Worse, we do not even need to have two screws to begin
with: $S, S \rightarrow T \vdash T \wedge T$. Thus, without specifying
all the relevant constraints on the environment (e.g., that a
screwdriver can handle one screw at the time) we end up with
unintuitive results. Another possible drawback of classical logic is
that it is commutative, the order of formulas does not matter.  For
example, if we describe the process of hammering a nail by means of
the implication if I place a nail $N$ and I provide the right force
$F$, then I can drive a nail in ($D$), that is $N \wedge F \rightarrow
D$, that would entail also that one can put a force before placing the
nail.

Moreover, we need to specify the relationship between the artefact and
the agents: for example, there are artefacts that can be used by one
agent at the time.  Since a crucial point in modelling artefacts is
their interaction with the environment, either we carefully list all
the relevant conditions, or we need to change the logical framework
that we use to represent the artefact's behaviour.  In this paper, we
propose to pursue this second strategy.  Our motivation is that,
instead of specifying for each artefact the precondition of its
application (e.g., that there is only one screw that a screw driver is
supposed to operate on), the logical language that encodes the
behaviour of the artefact already takes care of preventing unintuitive
outcomes. Thus, the formulas of Linear Logic shall represent actions
of agents and functions of artefacts, and the non-normal modality
shall specify which agent or artefact brings about which process.

\subsection{Functions}

The concept of a \emph{function} of an artefact aims to capture the
description of the behaviour of an artefact in an environment with
respect to its goals: artefacts are not living things but have a
purpose, attributed by a designer or a
user~\cite{borgoetal09handbook,Kroes2012}.  We model a function of an
artefact by means of a formula $A$ in \RSBIAT or \SRSBIAT.  If $A$ is a function
of an artefact $t$, then one can represent $t$'s behaviour as $\biat_t
A$ ($t$ brings about $A$) in a conceptually consistent manner, namely
an artefact brings about its function $A$.

With Linear Logic, we are equipped with a formalism to represent and
reason about processes and resources. In classical and intuitionistic
logic, if one has $A$ and $A$ implies $B$, then one has $B$, but $A$
still holds. This is fine for mathematical reasoning but often fails
to be acceptable in the real world where implication is
\emph{causal}. Girard remarks that ``[a] causal implication cannot be
iterated since the conditions are modified after its use; this process
of modification of the premises (conditions) is known in physics as
\emph{reaction}.''~\cite[p.~72]{girard89}
That is, Linear Logic allows for modelling how the function of an artefact can be actually realised in a certain environment.
At an abstract level, an artefact can be seen as an agent  $t$. It takes resource-sensitive
actions by reacting to the environment.  For any artefact $t$ with function $A$, $\biat_t A $, we say that $t$ accomplishes a certain goal $O$ in the environment $\Gamma$ if and only if the sequent 
$\Gamma[\biat_t A] \t O$ is provable. The context $\Gamma$ describes a number of preconditions that specify the environment resources as well as the actions of the agents that are interacting with the artefact. 
\dnote{The proof  of $\Gamma[\biat_t A] \t O$ exhibits the execution of the function of $t$ in the environment $\Gamma$: when $t$ is an artefact, and $\Gamma[\biat_t A] \t O$ is provable, then the occurrence of $A$ in the proof of $\Gamma[\biat_t A] \t O$ is the concrete instantiation of the function of $t$ in $\Gamma$.  Since we are using intuitionistic versions of Linear Logic, every proof of a sequent is a process that behaves like a function in the mathematical sense.\footnote{This suggestion can be made mathematically precise. It is possible to associate terms in the (linear) $\lambda$-calculus to proofs in intuitionistic Linear Logic \cite{Benton1993term}.} Hence, the rules of sequent calculus provide instructions to compose basic functions of artefacts to obtain complex functions and to model how the composition interacts with the environment. We will see examples of complex functions in the next paragraphs.} For that reason, our view of artefacts simply generalises to a number of artefacts that interact in an environment as follows: 


$$\Gamma[\biat_1 A_1, \dots,\biat_m A_m] \vdash O$$ 

Again, if the above sequent is provable, then the combination of artefacts $ \biat_1 A_1$, ..., $\biat_m A_m $ can achieve the goal $O$ in $\Gamma$ by executing their functions $A_1, \dots, A_m$ in $\Gamma$. 

Defining the function of an artefact as a formula demands some care because in this way functions do not have a
unique formulation. The functions $(A \otimes B) \implies C$, and $A
\implies (B \implies C)$ are provably equivalent. However, the
rule~$\biat_a$(re) ensures that bringing about a function
is provably equivalent to bringing about any of its
equivalent forms.\\ 
By means of sequent calculus provability, we can view the problem of using artefacts in an environment to achieve a goal as a decision problem that is related to the AI problem of planning \cite{KanovichVauzeilles2001}. 
Note that the complexity of deciding whether a goal is achievable depends only on the fragment of the logic that we use to model the formulas in the sequent. 
In the next paragraphs, we shall instantiate the descriptive features of our calculus by means of a number of toy examples. 



\subsection{Simple examples of functions}
\label{sec:screwdriver}

Take a very simple example. We can represent the function of a
screwdriver $s$ as an implication that states that if there is a screw
(formula $S$) and some agent brings about the right rotational force
($F$), then the screw gets tighten ($T$). The formula corresponding to
the function of the screwdriver is $S \otimes F \implies T$. The
formula that captures the screwdriver as an agent of the system is
$\biat_s (S \otimes F \implies T)$.

Suppose the environment provides $S$ and an agent $i$ is providing the
right force $\biat_{i} F$. We can show by means of the following proof
in \RSBIAT that the goal $T$ can be achieved.
\[
\AC{S \t S} \rl{}
\AC{F \t F}\rl{$\biat_{i}$(refl)}
\UC{\biat_{i}F \t F}\rl{$\otimes$R}
\BC{S, \biat_{i}F \t S \otimes F}
\AC{T \t T}
\rl{$\implies$L}
\BC{S, \biat_{i}F, S \otimes F \implies T \t T}
\rl{$\biat_{s}$(refl)}
\UC{S, \biat_{i}F, \biat_s (S \otimes F \implies T) \t T}
\dip   
\]
Our calculus is resource  sensitive, thus, as expected, we cannot infer for example that two agents can use the same screwdriver at the same time to tighten two screws: $$S, S,  \biat_{i} F, \biat_{j} F,  \biat_s (S \otimes F \implies T) \not\t T \otimes T$$
That would in fact require two screwdrivers and thus an extra $\biat_s (S \otimes F \implies T)$ at the left of the sequent. 
Often, to be effective for some goal $B$, an artefact's function transforming a resource $A$ into a resource $B$ should not be
realised before the resource $A$ is available. 

In the case of our example, the description of a screwdriver should exclude that the screw can be tighten \emph{before} a loose screw and a rotational force, in this order, are provided. Thus, we may reconsider our screwdriver as a ``non-commutative'' screwdriver $s\bullet$ and write its function as $S \odot F \setminus T$.  The screwdriver is now defined as $\biat_{s\bullet}(S \odot F \setminus T)$.  

\[
\AC{S \t S} \rl{}
\AC{F \t F}\rl{$\biat_{i}$(refl)}
\UC{\biat_{i}F \t F}\rl{$\odot$R}
\BC{S; \biat_{i}F \t S \odot F}
\AC{T \t T}
\rl{$\setminus$L}
\BC{(S; \biat_{i}F); S \odot F \setminus T \t T}
\rl{$\biat_{s\bullet}$(refl)}
\UC{S; \biat_{i}F; \biat_{s\bullet} (S \odot F \setminus T) \t T}
\dip   
\]

The meaning of entropy (ent) is the following. By means of (ent),  we can infer a fully commutative context:

$$S, \biat_{i}F, \biat_{s\bullet} (S \odot F \setminus T) \t T$$

That means that it is the description of the function of the artefact that takes care of of specifying how the resources have to be ordered. Of course, our screwdriver formula correctly excludes, for instance, that the force is applied after using the screwdriver: $S; \biat_{s\bullet} (S \odot F \setminus T); \biat_{i}F \not\t T$ Moreover, $\biat_{i}F; \biat_{s\bullet} (S \odot F \setminus T); S \not\t T$, and $\biat_{i}F; S; \biat_{s\bullet} (S \odot F \setminus T) \not\t T$.\\

The interaction of commutative and non-commutative operators is exemplified as follows. Suppose there are two screws $S$, $S$, two screwdrivers $s$ and $s'$ and two agents $a$, $b$. The goal of tightening two screws can be achieved by using the screwdrivers in whatever order, as the following proof shows. 

\footnotesize

\[
\AC{S \t S} \rl{}
\AC{F \t F}\rl{$\biat_{i}$(refl)}
\UC{\biat_{a}F \t F}\rl{$\odot$R}
\BC{S; \biat_{a}F \t S \odot F}
\AC{T \t T}
\rl{$\setminus$L}
\BC{(S; \biat_{a}F); S \odot F \setminus T \t T}
\rl{$\biat_{s}$(refl)}
\UC{S; \biat_{a}F; \biat_{s} (S \odot F \setminus T) \t T}
\AC{S \t S} \rl{}
\AC{F \t F}\rl{$\biat_{b}$(refl)}
\UC{\biat_{b}F \t F}\rl{$\odot$R}
\BC{S; \biat_{b}F \t S \odot F}
\AC{T \t T}
\rl{$\setminus$L}
\BC{(S; \biat_{b}F); S \odot F \setminus T \t T}
\rl{$\biat_{s'}$(refl)}
\UC{S; \biat_{b}F; \biat_{s'} (S \odot F \setminus T) \t T}
\rl{$\otimes$R}
\BC{[S; \biat_{a}F; \biat_{s} (S \odot F \setminus T)],[S; \biat_{b}F; \biat_{s'} (S \odot F \setminus T) ] \t T \otimes T}
\dip
\]

\normalsize

\subsection{Functions composition}

By extending the previous example, we can demonstrate how the output
of some artefact's function can naturally be fed into another function so as to construct a new complex artefact.

An \emph{electric screwdriver} has two components.  Firstly, the
\emph{power-pistol} creates some rotational force $F$ when the
button is pushed ($P$): $P \setminus F$. Secondly, what is
typically called the screwdriver \emph{bit} is for all intents and
purposes effectively a screwdriver as specified before: it tightens a
loose screw when a rotational force is applied. 
We define the electric screwdriver by means of  $\biat_e ((P \setminus F) \odot (S \odot F \setminus T))$

Now suppose the environment provides a loose screw $S$ and an agent
$i$ is pushing the button of the power pistol: $\biat_{i} P$. We can
show again that the goal $T$ of having a tighten screw can be achieved, by using the electric screwdriver.

\[
\rl{}
\AC{S \t S}
\AC{P \t P}
\rl{$\biat_{i}$(refl)}
\UC{\biat_i P \t P}
\AC{F \t F}
\rl{$\setminus$L}
\BC{ \biat_{i}P; P \setminus F \t F}
\rl{$\odot$R}
\BC{S; \biat_{i}P; P \setminus F \t S \odot F}
\AC{T \t T}
\rl{$\setminus$L}
\BC{S; \biat_{i}P; P \setminus F, S \odot F \setminus T   \t T}
\rl{$\odot$L}
\UC{S; \biat_{i} P; (P \setminus F) \odot (S \odot F \setminus T) \t T}
\rl{$\biat_e$(refl)}
\UC{S; \biat_{i} P; \biat_e ((P \setminus F) \odot (S \odot F \setminus T)) \t T}
\dip   
\]

\normalsize

\subsection{Complex interactions between agents and artefacts}

The function of an artefact may require to specify how agents use it. A number of aspects of the interaction between agents' actions and artefacts can be captured by means of the rules $\biat_i \otimes$, $\biat_i \odot$, and $\biat_i \with$. Recall that $\mathcal{A}$ is a set of agents. We write $\bigwith_{x \in \mathcal{A}} \biat_x A$ as a short hand for $ \biat_{i_1} A \with \dots \with \biat_{i_m} A$. The latter formula means that any agent can perform $A$, so for example
$ \biat_{i_1} A \with \dots \with \biat_{i_m} A \t \biat_{i_j} A$.\\

An artefact that is defined by  $\biat_t(\bigwith_{x \in \mathcal{A}} (\biat_x(A \otimes B) \implies O))$ requires the \emph{same} agent $x$ to perform both actions $A$ and $B$ in order to get $O$. For example, a one person rowboat that requires a single agent to operate on both oars ($R_1$) and ($R_2$), in whatever order, so to produce movement ($M$).
This is can be modelled by means of our $\biat_i \otimes$ rule. 

$$
\AC{\biat_i R_1 \t \biat_i R_1}
\AC{\biat_i R_2 \t \biat_i R_2}
\rl{$\biat_i \otimes$}
\BC{\biat_i R_1, \biat_i R_2 \t  \biat_i (R_1 \otimes R_2)}
\AC{M \t M}
\rl{$\implies$L}
\BC{\biat_i R_1, \biat_i R_2,  \biat_i ((R_1 \otimes R_2) \implies M) \t M}
\rl{$\with$L  enough times}
\UC{\biat_i R_1, \biat_i R_2, \bigwith_{x \in \mathcal{A}} \biat_i ((R_1 \otimes R_2) \implies M) \t M}
\rl{$\biat_t$(refl)}
\UC{\biat_i R_1, \biat_i R_2,  \biat_t (\bigwith_{x \in \mathcal{A}} \biat_i ((R_1 \otimes R_2) \implies M) ) \t M}
\dip
$$

On the other hand, by specifying the function by $\biat_t(\bigwith_{x, y \in \mathcal{A}, x \neq y} (\biat_x A \otimes \biat_y B) \implies O)$, we are forcing the agents who operate tool $t$ to be different (e.g., a crosscut saw). If an artefact's function does not determine whether the actions must be performed by the same agent, we can write $\biat_t(\bigwith_{x, y \in \mathcal{A}} (\biat_x A \otimes \biat_y B) \implies O)$.

In the non-commutative case, $\biat_t(\bigwith_{x \in \mathcal{A}} \biat_x(A \odot B) \implies O)$ forces the same agent to perform first $A$ and then $B$, whereas $\biat_t(\bigwith_{x, y \in \mathcal{A}, x \neq y} (\biat_x A \odot \biat_y B) \implies O)$ forces the agents to be different. For example, the function of a hanging ladder is described as follows: firstly, an agent holds the laden ($Ho$), then another agent climbs up ($Cl$) and reaches a certain position $R$: $\biat_t (\biat_a Ho \odot \biat_b Cl \setminus \biat_b R)$.
$$
\AC{\biat_a Ho \t \biat_a Ho}
\AC{\biat_b Cl \t \biat_b Cl}
\rl{$\odot$R}
\BC{\biat_a Ho; \biat_b Cl \t \biat_a Ho \odot \biat_b Cl}
\AC{\biat_b R \t \biat_b R}
\rl{$\setminus$L}
\BC{\biat_a Ho; \biat_b Cl; \biat_a Ho \odot \biat_b Cl \setminus \biat_b R\t \biat_b R}
\rl{$\biat_t$(refl)}
\UC{\biat_a Ho; \biat_b Cl; \biat_t(\biat_a Ho \odot \biat_b Cl \setminus \biat_b R)\t \biat_b R}
\dip
$$

By means of $\biat_i\with$, we can describe a function that requires an agent's choice. For example, a monkey wrench can tighten two sizes of nuts/bolts ($N_1$, $N_2$) provided  that an agent chooses the right measure ($M_1$, $M_2$): 
$\biat_t( (\biat_i (M_1 \with M_2) \implies \biat_i N_1) \with (\biat_i (M_1 \with N_2) \implies  \biat_i N_2))$. The following proof shows that if a single agent can choose the right measure for the nut (e.g., $M_1$), then the same agent can tighten the right type of nut (e.g., $\biat_i N_1$)

\footnotesize
$$
\AC{\biat_i M_1 \t \biat_i M_1}
\rl{$\with$L}
\UC{\biat_i M_1 \with \biat_i M_2 \t \biat_i M_1}
\AC{\biat_i M_2 \t \biat_i M_2}
\rl{$\with$R}
\UC{\biat_i M_1 \with \biat_i M_2  \t \biat_i M_2}
\rl{$\biat_i\with$}
\BC{\biat_i M_1 \with \biat_i M_2  \t \biat_i (M_1 \with M_2)}
\AC{\biat_i N_1 \t \biat_i N_1}
\rl{$\implies$L}
\BC{\biat_i M_1 \with \biat_i M_2 , \biat_i (M_1 \with M_2) \implies \biat_i N_1 \t \biat_i N_1}
\LeftLabel{$\with$L}
\UC{\biat_i M_1 \with \biat_i M_2 , (\biat_i (M_1 \with M_2) \implies \biat_i N_1) \with (\biat_i (M_1 \with M_2) \implies  \biat_i N_2)  \t \biat_i N_1}
\LeftLabel{$\biat_t$(refl)}
\UC{\biat_i M_1 \with \biat_i M_2 , \biat_t((\biat_i (M_1 \with M_2) \implies \biat_i N_1) \with (\biat_i (M_1 \with M_2) \implies  \biat_i N_2)) \t \biat_i N_1}
\dip
$$

\normalsize
In a similar way, we can represent functions of artefacts that require any number of actions and of agents to achieve a goal (of course, if we want to express that any subsets of $\mathcal{A}$ can operate the tool, then we need an exponentially long formula).




\subsection{Function warranty and reuse of artefacts}

\RSBIAT is resource-sensitive as the non-provable sequent in our screwdriver example
illustrated (Sec.~\ref{sec:screwdriver})
\[S, S,  \biat_{i} F, \biat_{j} F,  \biat_s (S \otimes F \implies T) \not\t T \otimes T\]
The screwdriver \emph{cannot be reused}, despite the fact that an
additional screw is available and an appropriate force is brought
about. This is perfectly fine as long as our interpretation of
resource consumption is \emph{concurrent}: all resources are consumed
at once. And indeed, one cannot tighten two screws at once with only
one screwdriver.

Abandoning a concurrent interpretation of resource consumption, we may
specialise the modality $\biat_a$ when $a$ is an artefactual agent in
such a way that the function of an artefact can be used at will. After
all, using a screwdriver once does not destroy the screwdriver. Its
function is still present after. It seems that we are after a property of
\emph{contraction} for our operator $\biat_s$.
\begin{center}
\AC{\Gamma, \biat_s A, \biat_s A \vdash B} \rl{c($\biat_s$)} \UnaryInfC{$\Gamma, \biat_s A \vdash B$} \DisplayProof
\end{center}
Now, if we adopt the rule, c($\biat_{s}$) we can easily see that
indeed $$S, S, \biat_{i} F, \biat_{j} F, \biat_{s} (S \otimes F \implies
T) \vdash T \otimes T$$ is provable.

There are several issues with this solution to `reuse' as a
duplication of assumptions. Some technical, some conceptual. The main
technical issue is that we lose a lot of control on the proof search,
as contraction is the main source of non-termination (of bottom-up
proof search). Another technical (or theoretical) issue is that trying
to give a natural condition on our frames that would be canonical for
contraction is out of question. The conceptual issue is the same as
the one posed by Girard in creating Linear Logic: duplication of
assumptions should not be automatic. Similarly, \emph{ad lib} reuse of
an artefact does not reflect a commonsensical experience. In general,
although they don't consume after the first use, tools will nonetheless
eventually become so worn out that they will not realise their
original function. \dnote{The point is that, in order to keep track of the relevant resources, the reuse of artefacts should not be arbitrary and should be allowed in a controlled manner instead.}

We can capitalise on the `additive' feature of Linear Logic language:
employing the `with' operator $\&$, we can specify a sort of warranty of
artefact functions. Denote $A^n = A \odot \dots \odot A$, for $n$ times. 
\newcommand\warranty[2]{\ensuremath{{#1}^{\leq #2}}}
We present the treatment by focusing on our example of screwdriver. 
A sequentially reusable screwdriver is defined as follows:

\[\warranty{(S\odot F \setminus T)}{n} = (S \odot F \setminus T) \with (S^2 \odot F^2 \setminus T^2) \with \dots \with (S^n \odot F^n \setminus T^n) 
\]

For example, with three screws, and three agents ($a$, $b$, and $c$) providing the appropriate force, then using a decently robust screwdriver, one
can obtain three tighten screws. We have:
\[S;S;S;\biat_a F; \biat_ b F; \biat_c F; \biat_{s\bullet}(\warranty{(S\odot F \setminus T)}{10000}) \vdash T \odot T \odot T
\]
where ${s\bullet}$ is our ``non-commutative'' screwdriver, now with a ten thousand-use warranty.

\footnotesize
$$
\AC{S \t S}
\AC{S \t S}
\rl{}
\BC{S; S \t S \odot S}
\AC{S \t S}
\rl{}
\BC{S; S; S \t S \odot S \odot S}
\AC{\biat_a F \t F}
\AC{\biat _b F\t F}
\rl{}
\BC{\biat_a F; \biat_b F \t \biat_a F \odot \biat_b F}
\AC{\biat_c F \t  F}
\rl{}
\BC{\biat_a F; \biat_b F; \biat_c F \t F \odot  F \odot F}
\rl{}
\BC{S; S; S; \biat_a F; \biat_b F; \biat_c F \t S \odot S \odot S \odot F \odot  F \odot F}
\AC{T^3 \t T^3}
\rl{}
\BC{S; S; S; \biat_a F; \biat_b F; \biat_c F; S^3 \odot F^3 \setminus T^3 \t T^3}
\rl{$\with$L}
\UC{S; S; S; \biat_a F; \biat_b F; \biat_c F; \warranty{(S\odot F \setminus T)}{10000} \t T^3}
\rl{$\biat_s$(re)}
\UC{S; S; S; \biat_a F; \biat_b F; \biat_c F; \biat_s(\warranty{(S\odot F \setminus T)}{10000}) \t T^3}
\dip
$$

\normalsize

Note that, the goal $T \otimes T \otimes T$ is not provable, and this reflects our view of reusability as a sequential operation. 



By pushing this analogy of warranty of artefact functions further on,
we can model a ``refurbishing'' function that augments the warranty of
the function $A$ of a tool $t$. For instance, consider the
refurbishing function which at the cost of consuming a resource $R$,
transforms a worn out (but not too much worn out!) screwdriver $t$
into a screwdriver $t$ with a function with extended warranty. It can
be written as:
\[ R \otimes \biat_s(\warranty{(S\odot F \setminus T)}{50}) \implies \biat_s(\warranty{(S\odot F \setminus T)}{7000})\]
and is for example a function that a bench grinder would have.

\section{Conclusion}
The semantics of all the languages studied in this paper are adequate
extensions of Urquhart's Kripke resource models for intuitionistic substructural 
logic. We first enriched the Kripke resource models with a
neighbourhood function to give a meaning to a minimal (non-normal)
modality. We obtained what we simply coin modal Kripke resource
models. We thus defined and studied a minimal non-normal modal
logic. The non-normal minimal modality $\Box$ is defined in the usual
way where $\Box A$ holds at a point of evaluation iff the extension of
$A$ is in the neighbourhood of the point of evaluation. With
Condition~\ref{princ:heredityN} on the models we enforced a sort of
heredity (or monotonicity) on the neighbourhood function, without which
the logic would not even validate modus ponens.

We introduced a Hilbert system and a sequent calculus. We showed that
both are sound and complete with respect to the class of modal Kripke
resource models. Moreover, we showed that the sequent calculus admits
cut elimination. We used this fact to establish that proof search can
be done in PSPACE. The soundness and completeness of the sequent
calculus, and the cut elimination show that the rule $\Box$(re)
\begin{center}
\AC{A \t B}
\AC{B \t A}\rl{}
\BC{\Box A \t \Box B}
\dip
\end{center}
which is atypical in a sequent calculus, yields the expected results
and presents no logical issue. Moreover, the new semantics and calculi
are general enough to extend the framework, as we did, to account for
partially commutative Linear Logic.

The significance of non-normal modal logics and their semantics in
modern developments in logics of agents has been emphasised before in
the literature.  In classical logic, logics of agency in particular,
have been widely studied and used in practical philosophy and in
multi-agent systems. Moving from classical logic to resource-sensitive
logics allows to lift the study of agents bringing about states of
affairs to the study of agents bringing about resources, or of
artefacts bringing about resource transformations.

We thus instantiated the minimal modal logics (commutative / and
partially commutative) with a resource-sensitive version of the logics
of bringing-it-about. Again, sound and complete Hilbert system and
sequent calculus are provided. Proof-search in the commutative version
is shown to be PSPACE-easy. We finally presented a number of
applications of the resulting resource-conscious logics of agency to
reason about the resource-sensitive manipulations of technical
artefacts.\par
\dnote{The perspectives for future research are extensions of our treatment to further non-normal modalities for logics of BDI agents. For instance, introducing resource-sensitive operators of beliefs will enable us to model agents' beliefs that depend on the amount of available information. Moreover, we are particularly interested in a resource-sensitive view of the strategic power of agents and coalitions, where the social interactions are mediated by the available resources.}


\bibliography{PoTr-revision.bbl}

\end{document}